\documentclass[10pt]{article}
\usepackage{mathrsfs}
\usepackage{amsfonts}
\usepackage{amsthm,amsmath,amssymb,anysize}
\newtheorem{lemma}{Lemma}[section]
\newtheorem{theorem}[lemma]{Theorem}

\setlength{\parindent}{1em}
\setlength{\baselineskip}{20pt}
\usepackage[numbers,sort&compress]{natbib}
\pagestyle{myheadings}
\marginsize{4.2cm}{4.2cm}{3.6cm}{3cm}%
\numberwithin{equation}{section}

\markright{The minimal number of generators for simple Lie
superalgebras}
\title{\textsf{The minimal number of generators for simple Lie superalgebras }}
\author{\textsc{ Wende Liu$^{1,2}$}\footnote{Supported by the NSF for Distinguished Young Scholars, HLJ Province (JC201004) and  the NSF
  of China (10871057)}\ \textsc{and}
    \textsc{Liming Tang$^{1,2}$}\footnote{Correspondence:  \texttt{wendeliu@ustc.edu.cn}  (W. Liu), \texttt{limingtang@hrbnu.edu.cn} (L. Tang)
 }
  \\
  \small\textit{$^{1}$Department of Mathematics},
  \small\textit{Harbin Institute of Technology}\\
  \small\textit{Harbin 150006, China}\\
 \small\textit{$^{2}$School of Mathematical Sciences},
 \small \textit{Harbin Normal University} \\
 \small \textit{Harbin 150025, China}}
\date{ }
\begin{document}
\maketitle
\begin{quotation}
{\small\noindent \textbf{Abstract}: Using the classification theorem
due to Kac we prove that any finite dimensional simple
 Lie superalgebra over an algebraically closed
field of characteristic 0 is generated by one element.

\vspace{0.05cm} \noindent{\textbf{Keywords}}: Classical Lie
superalgebra; Cartan Lie superalgebra; generator

\vspace{0.05cm} \noindent \textbf{Mathematics Subject Classification 2000}: 17B05, 17B20, 17B70}
\end{quotation}
 \setcounter{section}{-1}
\section{Introduction}
Throughout we work over
  an algebraically closed field  of characteristic zero, $\mathbb{F}$, and all the vector
spaces and algebras are assumed to be finite dimensional.
  Our principal aim is to determine the minimal
number of generators for simple Lie
superalgebras.  We prove that
 any simple Lie superalgebra is generated by one element in the super sense, that is,
  any simple Lie superalgebra coincides with the smallest sub-Lie superalgebra containing
  some fixed element. We know that for finite simple groups only the groups of prime
  orders can be generated by one element and that ???a simple Lie algebra is never generated by one element. Our results are not surprising since in a Lie superalgebra the quare of an odd element is even and not necessarily zero.
  As in finite simple group or simple Lie algebra cases, our proof is  dependent on the classification theorem of simple Lie superalgebras \cite{Kac}, but not a one-by-one checking.

This study is mainly motivated by two papers of Bois mentioned as follows.   In 2009,  Bois \cite{MBJ} proved that any simple Lie algebra in
arbitrary characteristic  $p\neq 2,3$ is generated by  2 elements
and moreover, the classical Lie algebras and the graded Cartan type simple Lie algebras $W(1,\underline{n})$
(Zassenhaus algebras) can be generated by 1.5 elements, that is,
 any given nonzero element can be paired a suitable element such that these two elements
  generate  the whole algebra.  Later, as a continuation of this work, Bois \cite{MBJ1} showed that the simple
graded Lie algebras of Cartan type $W(m,\underline{n})$ with $m\neq 1$ and the ones  of the remaining Cartan types $S, H$ and $K$ are never
generated by 1.5 elements. Papers \cite{MBJ,MBJ1} contain  a
 considerable amount of information in characteristic $p$ and cover the earlier results in characteristic $0$: In 1976,  Ionescu \cite{GIT} proved that a
simple Lie algebra  over the field of complex numbers is
generated by $1.5$ elements; In 1951, Kuranashi \cite{GKM} proved
that a semi-simple Lie algebra in characteristic 0 is generated by 2
elements.

By the classification theorem
  \cite{Kac},  a simple Lie superalgebra
(excluding simple Lie algebras) is  either a classical
Lie superalgebra or a  Cartan Lie superalgebra (see also \cite{MS}).
 The Lie algebra (even part) of a classical Lie superalgebra  is reductive
 and meanwhile there exists a similarity in the
structure side between the Cartan Lie superalgebras in
characteristic 0 and the simple graded Lie algebras of Cartan type in
characteristic $p$. In
view of the observation above, we began this study in 2009. Since the Lie algebra of a classical  Lie superalgebra is reductive and the odd part
decomposes into at most two irreducible components as
adjoint modules (see \cite{Kac} or \cite{MS}), we first proved that the
Lie algebra of a classical Lie superalgebra is generated by two elements
 and then obtained that any classical Lie superalgebra is generated by two
  (non-homogeneous)  elements in non-super sense; As expected, we also proved that any Cartan Lie
   superalgebra is generated by two (non-homogeneous)  elements (see  arXiv 1103. 4242vl math. ph).
   Thus we obtained that any simple Lie superalgebra is generated by two (non-homogeneous)  elements
   in the non-super sense and then we submitted the manuscript to a journal for publication. Later,
   a conversation with Professor Yucai Su during a workshop on Lie theory (organized by Professor Bin Shu, spring of 2011) led  us to  consider  the question:
 Determine all the simple Lie superalgebras which are generated by one element??? (equivalently, by two homogeneous elements in the non-super sense).

Let us briefly explain the outline and ideas in this improved
version. A simple fact  is that
$[L_{\bar{1}},L_{\bar{1}}]=L_{\bar{0}}$ for a simple Lie
superalgebra $L.$ So, for a classical Lie superalgebra $L$, our
discussion is mainly based on the weight decomposition of
$L_{\bar{1}}$ as $L_{\bar{0}}$-module relative to the standard
Cartan subalgebra???: We find the desired generators by starting
from the sum of all the odd weight vectors and use the fact that
$L_{\bar{1}}$ as $L_{\bar{0}}$-module is irreducible or a direct sum
of two irreducible submodules. For a Cartan Lie superalgebra $L$, we
know that $L$ is generated by its local part $L_{-1}+L_{0}+L_{1}$
with respect to the standard grading and moreover, the null $L_{0}$
is a Lie algebra and $L_{i}$ as $L_{0}$-module, $i=\pm 1$, is
irreducible or a direct sum of two irreducible submodules. Then,
considering the weight space decomposition relative to the standard
Cartan subalgebra of $L_{0}$???, we find the desired generators by
choosing a weight vector in each irreducible submodule of $L_{i}$.
   The proofs??? of main conclusions are
constructive and provide an explicit description of the generator candidates. The process involves certain computational techniques and we use certain information about classical Lie superalgebras from \cite{Z}.

  In this paper we write $\langle X\rangle$ for
the sub-Lie superalgebra generated by a subset $X$ in a Lie
superalgebra.
\section{Classical Lie superalgebras}

A  classical Lie superalgebra by definition is  a simple Lie
superalgebra
 $L=L_{\bar{0}}\oplus L_{\bar{1}}$ for which
  $L_{\bar{1}}$ as $L_{\bar{0}}$-module is
completely reducible  \cite{Kac,MS}.  The information of classical
Lie superalgebras is as follows \cite{Kac}:
$$\begin{tabular}{|l|l|l|} \multicolumn{3}{c}{
 Table 1.1}
\\[5pt]\hline
\multicolumn{1}{|c|}{$L$}&\multicolumn{1}{|c|}{$L_{\bar{0}}$}&\multicolumn{1}{|c|}{$L_{\bar{1}}$ as $L_{\bar{0}}$-module}\\
\hline
 ${\rm{A}}(m,n),\; m,n\geq 0, n\neq m$&
${\rm{A}}_{m}\oplus {\rm{A}}_{n}\oplus
\mathbb{F}$&$~~\mathfrak{sl}_{m+1}\otimes\mathfrak{sl}_{n+1}\otimes \mathbb{F}\oplus (\mbox{its dual})$\\
\hline
 ${\rm{A}}(n,n),\;n> 0$&
${\rm{A}}_{n}\oplus {\rm{A}}_{n}$&$~~\mathfrak{sl}_{n+1}\otimes\mathfrak{sl}_{n+1}\oplus (\mbox{its dual})$\\
\hline ${\rm{B}}(m,n),\;m\geq 0,n>0$&${\rm{B}}_{m}\oplus
{\rm{C}}_{n}$&$~~\mathfrak{so}_{2m+1}\otimes
 \mathfrak{sp}_{2n}$\\
 \hline ${\rm{D}}(m,n),\;m\geq 2,n>0$&${\rm{D}}_{m}\oplus
{\rm{C}}_{n}$&~~$\mathfrak{so}_{2m}\otimes
 \mathfrak{sp}_{2n}$\\
  \hline
 ${\rm{C}}(n),\;n\geq 2$&${\rm{C}}_{n-1}\oplus \mathbb{F}$&$~~\mathfrak{csp}_{2n-2}\oplus (\mbox{its dual})$\\
 \hline
  ${\rm{P}}(n),\;n\geq 2$&${\rm{A}}_{n}$&
  $~~\Lambda^{2}\mathfrak{sl}^{*}_{n+1}\oplus{\rm{S}}^{2}\mathfrak{sl}_{n+1}$\\ \hline
 ${\rm{Q}}(n),\;n\geq 2$&
 ${\rm{A}}_{n}$& $~~{\rm{ad}}\mathfrak{sl}_{n+1}$\\ \hline
${\rm{D}}(2,1;\alpha),\;\alpha\in \mathbb{F}\setminus \{-1,0\}$&
 ${\rm{A}}_{1}\oplus
 {\rm{A}}_{1}\oplus {\rm{A}}_{1}$&
 $~~\mathfrak{sl}_{2}\otimes\mathfrak{sl}_{2}\otimes\mathfrak{sl}_{2}$\\ \hline
${\rm{G}}(3)$& $\mathfrak{G}_{2}\oplus
 {\rm{A}}_{1}$&~~$\mathfrak{G}_{2}\otimes
 \mathfrak{sl}_{2}$\\ \hline
 ${\rm{F}}(4)$&
${\rm{B}}_{3}\oplus  {\rm{A}}_{1}$&~~$\mathfrak{spin}_{7}\otimes
\mathfrak{sl}_{2}$\\ \hline
\end{tabular}\\$$

 From Table 1.1, one sees that $L_{\bar{0}}$ is reductive and
 $L_{\bar{1}}$ as
$L_{\bar{0}}$-module is irreducible or a direct sum of
 two irreducible submodules.

 Throughout this section $L$ denotes a classical
Lie superalgebra with the standard Cartan subalgebra $H $. The
corresponding weight (root) decompositions are
\begin{eqnarray*}
&& L_{\bar{0}}=H\oplus
\bigoplus_{\alpha\in \Delta_{\bar{0}}}L_{\bar{0}}^{\alpha},\qquad
L_{\bar{1}}=\bigoplus_{\beta\in
\Delta_{\bar{1}}}L_{\bar{1}}^{\beta};\\
&& L=H\oplus \bigoplus_{\alpha\in
\Delta_{\bar{0}}}L_{\bar{0}}^{\alpha}\oplus\bigoplus_{\beta\in
\Delta_{\bar{1}}}L_{\bar{1}}^{\beta}.
\end{eqnarray*}
Write
   $$
   \Delta:=\Delta_{\bar{0}}\cup \Delta_{\bar{1}}\quad\mbox{and}\quad L^{\gamma}:=L_{\bar{0}}^{\gamma}\oplus L_{\bar{1}}^{\gamma}\quad\mbox{for}\;\gamma\in \Delta.
   $$
   Note that the standard  Cartan subalgebra of a classical Lie superalgebra is diagonal:
\begin{equation}\label{eq-classical-root-vector}
 \mathrm{ad}h(x)=\gamma(h)x \quad \mbox{for}\;h\in H,\; x\in L^{\gamma}, \;\gamma\in
 \Delta.\nonumber
 \end{equation}
Let $V$ be a vector space and $\mathfrak{F}:=\{f_{1},\ldots,f_{n}\}$
a finite set of non-zero linear functions on $V$. Write
$$\Omega_{\mathfrak{F}}:=\{v\in V\mid \Pi_{1\leq i\neq j\leq
n}(f_{i}-f_{j})(v)\neq 0\}.$$  It is a standard fact that
$\Omega_{\mathfrak{F}}\neq\emptyset$  (see also \cite[Lemma
2.2.1]{MBJ}\label{lemma-zarisk}). The following technical lemma is a basic fact in Linear Algebra. For convenience, we write down
a proof:
\begin{lemma}  \label{lemmeigenvector} \textit{Let $\mathfrak{A}$ be an
algebra. For $a\in \mathfrak{A}$ write $L_a$ for the
left-multiplication operator given by $a$. Suppose
$x=x_{1}+x_{2}+\cdots+x_{n}$ is a sum of eigenvectors of $L_a$
associated with mutually distinct eigenvalues. Then all $x_{i}$'s
lie in the subalgebra generated by $a$ and $x$.}
\end{lemma}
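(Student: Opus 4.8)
The plan is to use a Vandermonde-type argument. Let $\lambda_1,\dots,\lambda_n$ be the mutually distinct eigenvalues, so that $L_a x_i = \lambda_i x_i$. Consider the iterated left-multiplications $L_a^k x = \lambda_1^k x_1 + \cdots + \lambda_n^k x_i$ for $k = 0, 1, \dots, n-1$. In the algebra generated by $a$ and $x$, all the elements $L_a^k x = a(a(\cdots(ax)\cdots))$ (the product of $k$ copies of $a$ with $x$, associated in the chosen way) lie in this subalgebra, since it is closed under multiplication by $a$; note we should be slightly careful about non-associativity, but $L_a^k x$ is by definition $\underbrace{L_a \circ \cdots \circ L_a}_{k}(x)$, and each application of $L_a$ keeps us inside $\langle a, x\rangle$.

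First I would write down the $n$ relations $L_a^k x = \sum_{i=1}^n \lambda_i^k x_i$ for $k = 0, \dots, n-1$. The coefficient matrix is the Vandermonde matrix $(\lambda_i^k)_{0 \le k \le n-1,\ 1 \le i \le n}$, which is invertible precisely because the $\lambda_i$ are distinct. Hence each $x_i$ is an $\mathbb{F}$-linear combination of $x, L_a x, \dots, L_a^{n-1} x$. Since $\langle a, x\rangle$ is a subspace containing all of $x, L_a x, \dots, L_a^{n-1}x$, it contains each $x_i$, which is what we want.

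The only mildly delicate point, rather than a real obstacle, is bookkeeping about non-associativity: one must make sure that ``the subalgebra generated by $a$ and $x$'' is indeed closed under the operator $L_a$ so that each $L_a^k x$ genuinely lies in it. This is immediate since a subalgebra is closed under all products, in particular under left multiplication by its element $a$. I do not expect any genuine difficulty here; the lemma is essentially the statement that the Vandermonde matrix is invertible, repackaged in the language of left-multiplication operators, and the proof is two or three lines once the relations $L_a^k x = \sum_i \lambda_i^k x_i$ are displayed.
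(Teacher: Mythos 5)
Your proof is correct and takes essentially the same Vandermonde approach as the paper. The only (minor) difference is that by including the power $k=0$ you get the standard Vandermonde matrix $(\lambda_i^k)_{0\le k\le n-1}$, invertible whenever the $\lambda_i$ are distinct, which lets you avoid the paper's preliminary reduction to the case where all eigenvalues are nonzero (the paper uses $k\ge 1$ and then remarks that ``the general situation is clear'').
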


 \begin{proof} Let $\lambda_i$ be the eigenvalues of $L_a$ corresponding to
 $x_i$. Suppose for a moment that all the $\lambda_i$'s are nonzero.
 Then
 $$
 (L_a)^{k}(x)=\lambda_{1}^{k}x_{1}+\lambda_{2}^{k}x_{2}+\cdots+\lambda_{n}^{k}x_{n}\quad
 \mbox{for}\; k\geq 1.
 $$
 Our conclusion in this case follows from the fact that the Van der
 Monde
  determinate given by
 $\lambda_1,\lambda_2,\ldots,\lambda_n$ is nonzero and thereby the
 general situation is clear.
 \end{proof}
\begin{lemma}\cite[Proposition 1, p.137]{MS}\label{lem-weight-information}
\begin{itemize}
\item[$\mathrm{(1)}$]
  If $L\neq\mathrm{A}(1,1),$ $\mathrm{P}(3)$  or $\mathrm{Q}(n)$
 then $\mathrm{dim}L^{\gamma}=1$ for every $\gamma\in
\Delta.$
\item[$\mathrm{(2)}$]
If $L\neq {\rm{Q}}(n)$ then $0\notin \Delta_{\bar{1}}.$
\end{itemize}
\end{lemma}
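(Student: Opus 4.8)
The statement is a structural fact about the root systems of the classical Lie superalgebras, quoted from \cite{MS}; the plan for a direct verification is a case-by-case inspection of Table 1.1. For each $L$ one fixes the usual realization — $\mathfrak{sl}(m+1\,|\,n+1)$ for the $\mathrm A$-series, the orthosymplectic algebras for $\mathrm B,\mathrm C,\mathrm D$, the strange algebras $\mathfrak p(n),\mathfrak q(n)$, and for $\mathrm D(2,1;\alpha),\mathrm G(3),\mathrm F(4)$ the tensor-product description of $L_{\bar1}$ as an $L_{\bar0}$-module given in the table — with $H$ the diagonal Cartan subalgebra, and reads off the weights $\varepsilon_i,\delta_j$ carried by the coordinate functionals. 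Since in each case every root vector is a matrix unit, a fixed linear combination of two symmetric matrix units, or a pure tensor of weight vectors, the quantity $\dim L^\gamma$ can exceed $1$ only when (i) an even root coincides with an odd root, or (ii) two distinct irreducible constituents of $L_{\bar1}$ share a weight; and $0$ lies in $\Delta_{\bar1}$ only if some odd basis vector has weight $0$.

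I would first dispose of the classical series, where the verification reduces to linear bookkeeping among the functionals $\varepsilon_i,\delta_j$. For $\mathrm B(m,n),\mathrm C(n),\mathrm D(m,n)$ the odd weights all have the form $\pm\varepsilon_i\pm\delta_j$ (together with weights $\pm\delta_j$ when the defining module of $\mathfrak{so}_{2m+1}$ contributes its zero weight), each occurring once; since the $\varepsilon_i$ and $\delta_j$ are linearly independent these are nonzero and are never roots of $L_{\bar0}$, whose roots lie in a single coordinate block. For $\mathrm A(m,n)$ with $m\neq n$ and $\mathrm A(n,n)$ with $n\geq2$ the odd weights are $\pm(\varepsilon_i-\delta_j)$, and the relations forced by passing to $\mathfrak{sl}$, resp.\ $\mathfrak{psl}$ (namely $\sum_i\varepsilon_i=\sum_j\delta_j$, and in the $\mathrm A(n,n)$ case also $\sum_i\varepsilon_i=\sum_j\delta_j=0$), are too few to make $\varepsilon_i-\delta_j$ coincide with $\delta_{j'}-\varepsilon_{i'}$ or with $0$. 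The three exceptions surface exactly here. For $\mathrm Q(n)$ one has $L_{\bar1}=\mathrm{ad}\,\mathfrak{sl}_{n+1}$, so $0\in\Delta_{\bar1}$ and every even root is also odd. For $\mathrm A(1,1)=\mathfrak{psl}(2|2)$, where $\varepsilon_2=-\varepsilon_1$ and $\delta_2=-\delta_1$, one has $\delta_2-\varepsilon_2=\varepsilon_1-\delta_1$, so each odd root has multiplicity $2$. For $\mathrm P(3)$ the relation $\varepsilon_1+\varepsilon_2+\varepsilon_3+\varepsilon_4=0$ makes a weight $\varepsilon_i+\varepsilon_j$ of $\mathrm S^2\mathfrak{sl}_4$ equal a weight $-\varepsilon_k-\varepsilon_l$ of $\Lambda^2\mathfrak{sl}_4^*$; a routine check shows no such accident occurs for $\mathrm P(n)$ with $n\neq3$.

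For the exceptional superalgebras $\mathrm D(2,1;\alpha),\mathrm G(3),\mathrm F(4)$, Table 1.1 presents $L_{\bar1}$ as an outer tensor product of the standard (respectively spin, respectively $7$-dimensional) representations of the simple ideals of $L_{\bar0}$. Each of these representations — $\mathfrak{sl}_2$ on $\mathbb F^2$, $\mathfrak{sp}_{2n}$ on $\mathbb F^{2n}$, $\mathfrak{so}_{2m+1}$ on $\mathbb F^{2m+1}$, $\mathfrak{spin}_7$ on $\mathbb F^8$, $\mathfrak G_2$ on $\mathbb F^7$ — has all its weights of multiplicity $1$; since the Cartan subalgebras of the distinct ideals are linearly independent, the weights of the tensor product (sums of one weight from each factor) again all have multiplicity $1$ and cannot equal a root of $L_{\bar0}$, which involves a single ideal. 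Finally $0\notin\Delta_{\bar1}$: the $\mathfrak{sl}_2$ and $\mathfrak{sp}_{2n}$ factors have no zero weight in their defining modules, so the corresponding tensor factor already keeps the total sum off $0$. The only part needing genuine care is this weight bookkeeping in the $\mathfrak{sl}/\mathfrak{psl}$ cases — confirming that, over the whole $\mathrm A$- and $\mathrm P$-series, a collision of weights occurs precisely for $\mathrm A(1,1)$ and $\mathrm P(3)$, while $\mathrm Q(n)$ fails clause $(2)$ for its own evident reason; everything else is a short, mechanical check. For full details we refer to \cite[Proposition 1, p.137]{MS} and to the root-system description in \cite{Kac}.
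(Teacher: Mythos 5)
Your proposal is correct, but note that the paper does not prove this lemma at all: it is quoted verbatim from \cite[Proposition 1, p.137]{MS} and used as a black box. What you supply is an actual verification, namely a case-by-case inspection of the weights of $L_{\bar 1}$ against the roots of $L_{\bar 0}$ in the standard realizations, and the inspection is sound: the tensor-product cases ($\mathrm B$, $\mathrm D$, $\mathrm D(2,1;\alpha)$, $\mathrm G(3)$, $\mathrm F(4)$) are handled correctly by the linear independence of the Cartan subalgebras of the distinct simple ideals together with multiplicity-freeness of the defining/spin/$7$-dimensional modules; the $\mathrm A$- and $\mathrm P$-series reduce, as you say, to checking which linear relations among the $\varepsilon_i,\delta_j$ survive the passage to $\mathfrak{sl}$ or $\mathfrak{psl}$, and you correctly locate the collisions at $\mathrm A(1,1)$ (where $\varepsilon_2=-\varepsilon_1$, $\delta_2=-\delta_1$ force each odd weight to appear in both irreducible summands, matching Table 1.3) and at $\mathrm P(3)$ (where $\varepsilon_1+\varepsilon_2+\varepsilon_3+\varepsilon_4=0$ identifies a weight of $\mathrm S^2\mathfrak{sl}_4$ with one of $\Lambda^2\mathfrak{sl}_4^*$, matching Table 1.5); and $\mathrm Q(n)$ fails both clauses for the evident reason that $L_{\bar 1}\cong\mathrm{ad}\,\mathfrak{sl}_{n+1}$. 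Your argument in fact yields slightly more than the stated lemma, namely the converse that the three listed algebras genuinely have higher-dimensional weight spaces, which is the fact the paper implicitly relies on when it splits Theorem 1.3 into Cases 1 and 2. The only places you defer to a ``routine check'' (that no collision occurs for $\mathrm P(n)$, $n\neq 3$, and that the relations in the $\mathrm A$-series are ``too few'') are indeed routine, since the only relation available is proportionality to $\sum_i\varepsilon_i-\sum_j\delta_j$; spelling out that one line would make the proof self-contained, but as it stands nothing is wrong.
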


 Notice that, for $L=\mathrm{A}(1,1)$, $\mathrm{P}(3)$ or $\mathrm{Q}(n),$ from Table 1.1  $L_{\bar{0}}$ is a semi-simple Lie
algebra. Then  by a standard result in Lie algebras (see \cite{H})
we have
\begin{eqnarray}
  &&{\rm{dim}}L_{\bar{0}}^{\alpha}=1 \quad  \mbox{for}\;  \alpha\in \Delta_{\bar{0}},\label{eq1739f}
\,\ H=\sum_{\alpha\in
\Delta_{\bar{0}}}[L_{\bar{0}}^{\alpha},L_{\bar{0}}^{-\alpha}].
\end{eqnarray}

\begin{theorem}\label{homogeneous generators}
Any  classical  Lie superalgeba  is generated by 1 element.
\end{theorem}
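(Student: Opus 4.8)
The plan is to reduce the statement to exhibiting, for each classical $L$, a single element $x$ whose homogeneous components generate $L$, and to search for such $x$ using the weight decomposition of $L_{\bar{1}}$ as an $H$-module. Two remarks drive everything. First, $\langle x\rangle$ is a sub-Lie superalgebra, hence $\mathbb{Z}_2$-graded, so it automatically contains both $x_{\bar{0}}$ and $x_{\bar{1}}$; it therefore suffices to find one even and one odd element together generating $L$. Second, since $[L_{\bar{1}},L_{\bar{1}}]=L_{\bar{0}}$ for a simple Lie superalgebra, it is enough to arrange $L_{\bar{1}}\subseteq\langle x\rangle$, for then $L_{\bar{0}}=[L_{\bar{1}},L_{\bar{1}}]\subseteq\langle x\rangle$ as well. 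Accordingly I would take $x=h+e$ with $h\in H$ and $e$ a sum of odd weight vectors, one chosen from each $L_{\bar{1}}^{\beta}$, $\beta\in\Delta_{\bar{1}}$.

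Consider first the generic situation $L\neq\mathrm{A}(1,1),\mathrm{P}(3),\mathrm{Q}(n)$. Then Lemma~\ref{lem-weight-information} gives $\dim L^{\gamma}=1$ for every $\gamma\in\Delta$ and $0\notin\Delta_{\bar{1}}$, so each $L_{\bar{1}}^{\beta}=\mathbb{F}e_{\beta}$ is a line and $L_{\bar{1}}=\bigoplus_{\beta\in\Delta_{\bar{1}}}\mathbb{F}e_{\beta}$. Put $e=\sum_{\beta\in\Delta_{\bar{1}}}e_{\beta}$ and choose $h\in H$ so that the scalars $\beta(h)$, $\beta\in\Delta_{\bar{1}}$, are pairwise distinct and nonzero; such $h$ exists since it is excluded only from finitely many proper hyperplanes of $H$, which is precisely the kind of statement asserting $\Omega_{\mathfrak{F}}\neq\emptyset$. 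Then $h,e\in\langle x\rangle$ by gradedness, and, as $e=\sum_{\beta}e_{\beta}$ is a sum of eigenvectors of the left-multiplication operator $L_{h}=\mathrm{ad}\,h$ with pairwise distinct eigenvalues $\beta(h)$, Lemma~\ref{lemmeigenvector} (with $a=h$) yields $e_{\beta}\in\langle h,e\rangle\subseteq\langle x\rangle$ for all $\beta$. Hence $L_{\bar{1}}\subseteq\langle x\rangle$, and we are done by the reduction above.

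The genuine difficulty is the three algebras $\mathrm{A}(1,1),\mathrm{P}(3),\mathrm{Q}(n)$, where some weight spaces $L_{\bar{1}}^{\beta}$ have dimension greater than one, so one weight vector per weight no longer spans $L_{\bar{1}}$. For these I would use the remaining structural data: $L_{\bar{0}}$ is semisimple with $\dim L_{\bar{0}}^{\alpha}=1$ and $H=\sum_{\alpha}[L_{\bar{0}}^{\alpha},L_{\bar{0}}^{-\alpha}]$, as recorded in \eqref{eq1739f}, and $L_{\bar{1}}$ is either irreducible as an $L_{\bar{0}}$-module ($\mathrm{Q}(n)$) or a sum $M_{1}\oplus M_{2}$ of two irreducible submodules ($\mathrm{A}(1,1),\mathrm{P}(3)$). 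I would again take $x=h+\sum_{\beta}w_{\beta}$ with $h\in H$ chosen so the relevant $\beta(h)$ are pairwise distinct and nonzero, and, for each $\beta$, $w_{\beta}\in L_{\bar{1}}^{\beta}$ chosen so that when $\dim L_{\bar{1}}^{\beta}=2$ both components of $w_{\beta}$ along $M_{1},M_{2}$ are nonzero --- and for $\mathrm{A}(1,1)$, where $M_{1}\cong M_{2}$, so that these components occur in distinct ratios for distinct $\beta$; for $\mathrm{Q}(n)$ one simply omits the weight $0$. Exactly as in the generic case, Lemma~\ref{lemmeigenvector} puts every $w_{\beta}$ into $\langle x\rangle$.

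From here I would recover $L_{\bar{0}}$ in two steps, and this is the computational heart of the argument. Every even root $\alpha$ is a sum $\beta+\gamma$ of two odd roots, so for a sufficiently generic choice of the $w_{\beta}$ the bracket $[w_{\beta},w_{\gamma}]$ is a nonzero element of the line $L_{\bar{0}}^{\alpha}$; this produces all the even root spaces, whence $H=\sum_{\alpha}[L_{\bar{0}}^{\alpha},L_{\bar{0}}^{-\alpha}]\subseteq\langle x\rangle$ by \eqref{eq1739f}, so $L_{\bar{0}}\subseteq\langle x\rangle$. Checking the non-vanishing of the structure constants involved for $\mathrm{A}(1,1),\mathrm{P}(3),\mathrm{Q}(n)$, using the explicit information of \cite{Z}, is where the real work lies. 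Finally, $\langle x\rangle\cap L_{\bar{1}}$ is now an $L_{\bar{0}}$-submodule of $L_{\bar{1}}$ containing vectors with nonzero projection onto every irreducible $L_{\bar{0}}$-constituent of $L_{\bar{1}}$ --- for $\mathrm{A}(1,1)$ one uses here that two vectors with distinct mixing ratios span $M_{1}\oplus M_{2}$ --- so, as $L_{\bar{1}}$ has at most two such constituents, it equals $L_{\bar{1}}$. Therefore $\langle x\rangle=L$ in every case.
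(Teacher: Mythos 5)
Your generic case ($L\neq\mathrm{A}(1,1),\mathrm{P}(3),\mathrm{Q}(n)$) is correct and is exactly the paper's argument: $x=h+e$ with $h\in\Omega_{\Delta_{\bar 1}}$, Lemma~\ref{lemmeigenvector} applied to $\mathrm{ad}\,h$ to isolate the one-dimensional odd weight spaces, then $L_{\bar 0}=[L_{\bar 1},L_{\bar 1}]$. (Note that the nonvanishing of the $\beta(h)$ is not actually needed for Lemma~\ref{lemmeigenvector}, only their distinctness; the paper invokes $0\notin\Delta_{\bar 1}$ so that $\Delta_{\bar 1}$ consists of nonzero functions as required in the definition of $\Omega_{\mathfrak F}$.)

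For the three exceptional algebras, however, what you give is a plan rather than a proof, and you say so yourself (``this is where the real work lies''). Two specific things are missing. First, the claim that every even root space is obtained as a nonzero bracket $[w_{\beta},w_{\gamma}]$ ``for a sufficiently generic choice'' is not established: the $w_{\beta}$ are already constrained (nonzero projections to both constituents, prescribed mixing ratios), so you cannot appeal to genericity without checking that the relevant structure constants do not force these brackets to vanish on the constrained locus. The paper settles this by writing down explicit bases (Tables 1.2--1.7) and exhibiting each even root vector as an explicit bracket of the elements produced by Lemma~\ref{lemmeigenvector}. Second, for $\mathrm{A}(1,1)$ the two irreducible constituents of $L_{\bar 1}$ are isomorphic, so ``a submodule containing vectors with nonzero projection onto every constituent equals $L_{\bar 1}$'' is false as stated (a graph of an isomorphism is a counterexample); your ``distinct mixing ratios'' remark points in the right direction but is not carried out (one must argue via Schur's lemma that two vectors whose ratios differ cannot both lie on such a graph). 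The paper avoids this issue entirely: it takes all mixing ratios equal to $1$, first recovers $L_{\bar 0}$, and then \emph{separates} the two constituents by bracketing with an even root vector, e.g.\ $[e_{14}+e_{32},e_{21}]=-e_{24}+e_{31}$ together with $e_{24}+e_{31}\in\langle x+h\rangle$ yields $e_{24}\in L_{\bar 1}^{1}$ and $e_{31}\in L_{\bar 1}^{2}$ individually. A similar explicit separation is performed for $\mathrm{P}(3)$, and for $\mathrm{Q}(n)$ the paper keeps the weight-zero odd vectors and brackets against their sum $Z$ to generate $L_{\bar 0}$, whereas your variant (omitting weight zero) would need its own verification. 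Until these computations are supplied, the exceptional cases remain open in your write-up.
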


\begin{proof}  By
Lemma \ref{lem-weight-information}(1),  we treat two cases
separately:\\

  \noindent \textit{Case 1}.  If $L\neq\mathrm{A}(1,1)$, $\mathrm{Q}(n)$ or
$\mathrm{P}(3),$ then all the weight spaces are $1$-dimensional.
Choose any $h\in \Omega_{\Delta_{\bar{1}}}\subset H$ and
  an
 element
$x=\sum_{\gamma\in\Delta_{\bar{1}}}x_{\bar{1}}^{\gamma}$, where
$x_{\bar{1}}^{\gamma}$ is a weight vector of $\gamma$.  By Lemmas
 \ref{lem-weight-information}(2) and  \ref{lemmeigenvector},
  all components $x_{\bar{1}}^{\gamma}$ belong to $\langle x+h\rangle.$  Since
$\mathrm{dim}L^{\gamma}=1,$ we conclude that
$L^{\gamma}\subset\langle x+h\rangle$ for  $\gamma\in
\Delta_{\bar{1}}$  and then  $L_{\bar{1}}\subset L.$ By
\cite[Proposition 1.2.7(1), p.20]{Kac},
$L_{\bar{0}}=[L_{\bar{1}},L_{\bar{1}}]$ and then
$\langle x+h\rangle=L$.\\

  \noindent \textit{Case 2}. Let $L=\mathrm{A}(1,1)$, $\mathrm{Q}(n)$ or
$\mathrm{P}(3).$ In this case, there exists a weight space which is
not 1-dimensional.

Let $L$=A$(1,1)$. For simplicity, write
  $e_{ij}$ for $e_{ij}+\mathbb{F}I_{4}.$  By Table 1.1,
let $L_{\bar{1}}=L_{\bar{1}}^{1}\oplus L_{\bar{1}}^{2}$ be a direct
sum of two irreducible $L_{\bar{0}}$-modules.
  The standard
  basis of A$(1,1)$ is listed below:
$$
\begin{tabular}{|l|l|l|l|}
\multicolumn{4}{c}{ Table 1.2} \\[1pt]
 \hline & $H$  \vline ~~~$e_{11}+e_{33}, \,\
 e_{11}+e_{44}$&&$L_{\bar{1}}^{1}$ \vline~~~$e_{13}, \,\
e_{14}, \,\ e_{23},\,\ e_{24}$\\ \cline{2-2}
\cline{4-4}\raisebox{1ex}[0pt]{$L_{\bar{0}}$}&~~~~~~~$e_{12}, \,\
e_{21}, \,\ e_{34}, \,\ e_{43}$
 &
  \raisebox{1ex}[0pt]{$L_{\bar{1}}$}&$L_{\bar{1}}^{2}$ \vline ~~~$e_{31}, \,\ e_{32}, \,\ e_{41}, \,\ e_{42}$ \\
\hline
\end{tabular}\\
$$
Let $x$ be the sum of all the standard odd  basis elements (weight
vectors) in Table 1.2, that is, $$x=e_{13}+ e_{14}+ e_{23}+
e_{24}+e_{31}+ e_{32}+e_{41}+ e_{42}.$$
 Choose an element $h\in \Omega_{\Delta_{\bar{1}}}.$ Assert $\langle
x+h\rangle =L.$ To that aim, define $\varepsilon_{i}^{a}$ to be the linear
function on $H$  given by
$\varepsilon_{i}^{a}(e_{11}+e_{2+j,2+j})=\delta_{ij}$ for $1\leq
i,j\leq 2.$
 All the odd weights  and the corresponding odd weight vectors are listed below: \\
$$
\begin{tabular}{|c|c|c|c|c|}
\multicolumn{5}{c}{Table 1.3} \\[1pt]
   \hline
   weights & $\varepsilon_{2}^{a}$ & $\varepsilon_{1}^{a}$ & $-\varepsilon_{1}^{a}$ & $-\varepsilon_{2}^{a}$ \\
\hline  vectors & $e_{13}, \,\ e_{42}$ & $e_{14}, \,\ e_{32}$ & $e_{23}, \,\ e_{41}$ & $e_{31}, \,\ e_{24}$\\
   \hline
 \end{tabular}
 \\
$$
Then, by  Lemma \ref{lemmeigenvector} and Table 1.3, the elements
$$e_{13}+e_{42}, \,\ e_{14}+e_{32}, \,\ e_{23}+e_{41}, \,\
e_{31}+e_{24}$$ lie in $\langle x+h \rangle.$ A direct computation
shows that
\begin{eqnarray*}
&&e_{34}=\frac{1}{2}[e_{14}+e_{32},e_{24}+e_{31}], \,\
e_{12}=\frac{1}{2}[e_{14}+e_{32},e_{13}+e_{42}],\\
&&e_{21}=\frac{1}{2}[e_{24}+e_{31},e_{23}+e_{41}], \,\
e_{43}=\frac{1}{2}[e_{13}+e_{42},e_{23}+e_{41}].
\end{eqnarray*}
Then by Table 1.2, all the standard even basis elements $e_{12},
e_{21}, e_{43}, e_{34}\; \mbox{lie in}\; \langle x+h \rangle.$
According to (\ref{eq1739f}), we have $L_{\bar{0}}\subset \langle x+
h\rangle.$ As
$$[e_{14}+e_{32},e_{21}]=-e_{24}+e_{31} \in \langle x+h \rangle,$$
we have $e_{24},  e_{31}\in \langle x+h\rangle.$ Since $e_{24}\in
L_{\bar{1}}^{1},$ $e_{31}\in L_{\bar{1}}^{2}$ (see Table 1.2) and
$L_{\bar{1}}^{i}$ is  irreducible as $L_{\bar{0}}$-module, where
$i=1,2,$  we have $L_{\bar{1}}\subset \langle x+h \rangle.$ Therefore
$\langle
x+h\rangle=L.$\\

 Let $L={\rm{P(3)}}.$ Note that ${\rm{P(3)}}$ is a subalgebra of
A(3,3) consisting of the matrices of the form:
  $\left(
    \begin{array}{c|c}
      a & b\\
      \hline
      c & -a^{\rm{T}} \\
    \end{array}
  \right)$,
  where ${\rm{tr}}(a)=0,$ $b$ is symmetric and $c$ is skew-symmetric
  (see \cite{Kac}). By Table 1.1, let $L_{\bar{1}}=L_{\bar{1}}^{1}\oplus
L_{\bar{1}}^{2}$ be a direct sum of two irreducible
$L_{\bar{0}}$-modules.
The standard  basis of ${\rm{P(3)}}$ is as follows:\\
$$
\begin{tabular}{|l|l|}
\multicolumn{2}{c}{ Table 1.4} \\[1pt]
  \hline &$H$\vline~~~ $~e_{11}-e_{22}-e_{55}+e_{66}$, \; $e_{11}-e_{33}-e_{55}+e_{77}$, \; $e_{11}-e_{44}-e_{55}+e_{88}$\\ \cline{2-2}
  \raisebox{1ex}[0pt]{$L_{\bar{0}}$}&~~~~~~~$~e_{12}-e_{65}, \; e_{13}-e_{75}, \; e_{14}-e_{85}, \; e_{23}-e_{76}, \; e_{24}-e_{86}, \; e_{34}-e_{87},$\\&~~~~~~~$~e_{21}-e_{56}, \;e_{31}-e_{57}, \; e_{41}-e_{58}, \; e_{42}-e_{68}, \; e_{43}-e_{78}, \; e_{32}-e_{67}$\\
 \hline&$ $~\; \;~$ \vline $ ~~~$e_{15},\;  e_{26}, \; e_{18}+e_{45},\;
 e_{28}+e_{46},\;
 e_{38}+e_{47},$\\ \raisebox{1ex}[0pt]{$L_{\bar{1}}$} &\raisebox{1ex}[0pt]{$L_{\bar{1}}^{1}$}~\vline~~~~$e_{37},\;  e_{48},\;  e_{16}+e_{25},\;
 e_{17}+e_{35},\;
 e_{27}+e_{36}$\\
 \cline{2-2}&$L_{\bar{1}}^{2}~$\vline \;~~~$e_{52}-e_{61}, \; e_{53}-e_{71}, \; e_{54}-e_{81}, \; e_{63}-e_{72}, \; e_{64}-e_{82}, \;e_{74}-e_{83}$  \\
  \hline
\end{tabular}
\\
$$
Let $x$ be the sum of all standard odd basis elements (weight
vectors) in Table 1.4. Choose an element $h\in
\Omega_{\Delta_{\bar{1}}}.$ Assert $\langle x+h\rangle=L.$
To that aim, define $\varepsilon_{i}^{p}$ to be the linear function on $H$ given
by
$$\varepsilon_{i}^{p}(e_{11}-e_{1+j,1+j}-e_{55}+e_{5+j,5+j})=\delta_{ij}$$
for $1\leq i,j\leq 3.$
 All the odd weights and the corresponding odd weight vectors are listed below:\\
$$
\begin{tabular}{|c|c|c|c|c|}
\multicolumn{5}{c}{ Table 1.5} \\[1pt]
  \hline
  weight& $2\varepsilon_{1}^{p}+2\varepsilon_{2}^{p}+2\varepsilon_{3}^{p}$&$ -2\varepsilon_{1}^{p}$ & $-2\varepsilon_{2}^{p}$ & $-2\varepsilon_{3}^{p}$  \\
  \hline
  vectors &$ e_{15}$ & $e_{26}$ & $e_{37}$&$ e_{48}$ \\
\hline
  weight & $\varepsilon_{1}^{p}+\varepsilon_{3}^{p}$&$-\varepsilon_{1}^{p}-\varepsilon_{2}^{p}$& $-\varepsilon_{1}^{p}-\varepsilon_{3}^{p}$& \\
  \hline
  vectors & $e_{17}+e_{35},$$e_{64}-e_{82}$&
  $e_{27}+e_{36},$$e_{54}-e_{81}$& $e_{28}+e_{46},$$e_{53}-e_{71}$& \\
  \hline
  weight & $\varepsilon_{2}^{p}+\varepsilon_{3}^{p}$ & $\varepsilon_{1}^{p}+\varepsilon_{2}^{p}$& $-\varepsilon_{2}^{p}-\varepsilon_{3}^{p}$&\\
\hline
 vectors
  &$e_{16}+e_{25},$$e_{74}-e_{83}$& $e_{18}+e_{45},$$e_{63}-e_{72}$&
  $e_{38}+e_{47},$$e_{52}-e_{61}$&\\
  \hline
\end{tabular}
\\
$$
By Lemma \ref{lemmeigenvector} and Table 1.5,  one sees that
$\langle x+h \rangle$ contains the following elements
$$e_{37}, \,\ e_{48},\,\
e_{38}+e_{47}+e_{52}-e_{61},\,\ e_{16}+e_{25}+e_{74}-e_{83},\,\
e_{17}+e_{35}+e_{64}-e_{82},$$
$$e_{15}, \,\ e_{26}, \,\
e_{27}+e_{36}+e_{54}-e_{81},\,\   e_{18}+e_{45}+e_{63}-e_{72}, \,\
e_{28}+e_{46}+e_{53}-e_{71}.$$ Lie superbrackets of the above odd
elements yield
\begin{eqnarray*}
&&e_{57}-e_{31}=[e_{37},e_{28}+e_{46}+e_{53}-e_{71}], \,\
e_{58}-e_{41}=[e_{48},e_{27}+e_{36}+e_{54}-e_{81}],\\
&&e_{56}-e_{21}=[e_{26},e_{38}+e_{47}+e_{52}-e_{61}], \,\
e_{12}-e_{65}=[e_{15},e_{38}+e_{47}+e_{52}-e_{61}],\\
&&e_{13}-e_{75}=[e_{15},e_{28}+e_{46}+e_{53}-e_{71}], \,\
e_{23}-e_{76}=[e_{26},e_{18}+e_{45}+e_{63}-e_{72}],\\
&&e_{24}-e_{86}=[e_{26},e_{17}+e_{35}+e_{64}-e_{82}], \,\
e_{14}-e_{85}=[e_{15},e_{27}+e_{36}+e_{54}-e_{81}],\\
&&e_{34}-e_{87}=[e_{37},e_{16}+e_{25}+e_{74}-e_{83}], \,\
e_{67}-e_{32}=[e_{37},e_{18}+e_{45}+e_{63}-e_{72}],\\
&&e_{68}-e_{42}=[e_{48},e_{17}+e_{35}+e_{64}-e_{82}], \,\
e_{78}-e_{43}=[e_{48},e_{16}+e_{25}+e_{74}-e_{83}].
\end{eqnarray*}
Then, according to Table 1.4 and (\ref{eq1739f}) one sees
$L_{\bar{0}}\subset \langle x+h \rangle.$ Since
$$e_{18}+e_{45}-e_{63}+e_{72}=[e_{17}+e_{35}+e_{64}-e_{82},
e_{78}-e_{43}]\in \langle x+h \rangle,$$ and
$e_{18}+e_{45}+e_{63}-e_{72}\in \langle x+h\rangle,$ we have
$$e_{18}+e_{45},\;e_{63}-e_{72}\in \langle x+h \rangle.$$ By Table
1.4 we have $$e_{18}+e_{45}\in L_{\bar{1}}^{1},\; e_{63}-e_{72}\in
L_{\bar{1}}^{2}.$$ Then the irreducibility of $L_{\bar{1}}^{i}$ as
$L_{\bar{0}}$-module
 ensures that $L_{\bar{1}}^{i}\subset \langle x+h\rangle,$ where $i=1,2.$ Therefore,
 $\langle x+h\rangle=L.$\\

  Let $L= \mathrm{Q}(n).$ Note that ${\rm{Q}}(n)=\widetilde{\rm{Q}}(n)/\mathbb{F}I_{2n+2}$ and
  $\widetilde{\rm{Q}}(n)$ is the subalgebra of $\mathfrak{sl}(n+1,n+1)$ consisting of the matrices of the form:
  $\left(
    \begin{array}{c|c}
      a & b\\
      \hline
      b & a \\
    \end{array}
  \right)$,
  where ${\rm{tr}}(b)=0$ (see \cite{Kac}).
 For simplicity, write $e_{ij}$ for $e_{ij}+\mathbb{F}I_{2n+2}.$  The
standard
  basis of Q($n$) is listed below:\\
$$
\begin{tabular}{|l|l|}
\multicolumn{2}{c}{Table 1.6} \\[1pt]
  \hline &$H$\vline~~~~~ $e_{ii}+e_{n+1+i,n+1+i},\,\ 1\leq i\leq n$\\
 \cline{2-2}  \raisebox{1.6ex}[0pt]{$L_{\bar{0}}$} &~~~~~~~~~$e_{ij}+e_{n+1+i,n+1+j},\,\ 1\leq i\neq j\leq n+1$ \\
 \hline  &~~$e_{1,n+2}+e_{n+2,1}-e_{i,n+1+i}-e_{n+1+i,i},\,\ 2\leq i\leq n+1,$\\\raisebox{1.6ex}[0pt]{$L_{\bar{1}}$}&~~$e_{j,n+1+k}+e_{n+1+j,k}, \,\ 1\leq k\neq j\leq n+1$\\
  \hline
\end{tabular}
\\
$$
Let $x$ be the sum of all standard odd  basis elements (weight
vectors) in Table 1.6. Choose an element $h\in
\Omega_{\Delta_{\bar{1}}}.$ Assert $\langle x+h\rangle=L.$ To that
aim, define $\varepsilon_{i}^{q}$ to be the linear function on $H$
by $\varepsilon_{i}^{q}(e_{jj}+e_{n+1+j,n+1+j})=\delta_{ij}$ for
$1\leq i,j\leq n.$
All the odd weights  and the corresponding odd weight vectors are as follows:\\
$$
\begin{tabular}{|c|c|}
\multicolumn{2}{c}{Table 1.7} \\[1pt]
  \hline
   weights &$\varepsilon_{j}^{q}-\varepsilon_{k}^{q},\,\ 1\leq k\neq j\leq n$\\
  \hline
  vectors &$e_{j,n+1+k}+e_{n+1+j,k}, \,\ 1\leq k\neq j\leq n$ \\\hline
   weights &$\varepsilon_{j}^{q},\,\ 1\leq j\leq n$\\
  \hline
 vectors &$e_{j,2n+2}+e_{n+1+j,n+1}, \,\ 1\leq j\leq n$   \\
  \hline
  weights &$-\varepsilon_{k}^{q}, \,\ 1\leq k\leq n$\\
  \hline
 vectors &$e_{n+1,n+1+k}+e_{2n+2,k}, \,\ 1\leq k\leq n$   \\
 \hline
 weights &$0$\\
  \hline
 vectors &$e_{1,n+2}+e_{n+2,1}-e_{i,n+1+i}-e_{n+1+i,i}, \,\ 2\leq i\leq n+1$\\
 \hline
\end{tabular}\\
$$

 By  Lemma \ref{lemmeigenvector} and Table 1.7,  one sees that $\langle
x+h\rangle$ contains  the following elements
\begin{eqnarray*}
&&e_{j,n+1+k}+e_{n+1+j,k}\; (1\leq k\neq j\leq n),\\
&&e_{j,2n+2}+e_{n+1+j,n+1} \;(1\leq j\leq n),\\
&&e_{n+1,n+1+k}+e_{2n+2,k}\;(1\leq k\leq n),\\
&&\sum_{j=2}^{n+1}(e_{1,n+2}+e_{n+2,1}-e_{j,n+1+j}-e_{n+1+j,j}).
\end{eqnarray*}
Write
 $Z$ for $\sum_{j=2}^{n+1}(e_{1,n+2}+e_{n+2,1}-e_{j,n+1+j}-e_{n+1+j,j}).$
Then
\begin{eqnarray*}
&&[e_{i,n+1+k}+e_{n+1+i,k},Z] \\
=&&\delta_{ij}(e_{jk}+e_{n+1+j,n+1+k})-\delta_{i1}(e_{1k}+e_{n+2,n+1+k})\\&&+\delta_{k1}(e_{i1}+e_{n+1+i,n+2})-\delta_{kj}(e_{ij}+e_{n+1+i,n+1+j}),\;1\leq
i\neq k\leq n+1.
\end{eqnarray*}
Hence
$$e_{ik}+e_{n+1+i,n+1+k}\in\langle x+h\rangle,
 \,\
 1\leq i\neq k\leq n+1.$$
 So, by Table 1.6 and (\ref{eq1739f}) we have $L_{\bar{0}}\subset\langle x+h \rangle.$  Since $x$ lies in
 $\langle x+h \rangle$ and $L_{\bar{1}}$ is irreducible as $L_{\bar{0}}$-module (see Table 1.1), we have
$L_{\bar{1}}\subset\langle x+h\rangle.$ Furthermore, $\langle x+h
\rangle=L.$

\end{proof}

\section{Cartan Lie superalgebras}
All the Cartan Lie superalgebras are listed below \cite{Kac, MS}:
$$
W(n)\,\mbox{with}\, n\geq 3,\,\ S(n)\,\mbox{with}\, n\geq 4, \,\
\widetilde{S}(2m) \, \mbox{with}\, m\geq 2\,\ \mbox{and}\,\ H(n)\,
 \mbox{with}\, n\geq 5.$$
 Let $\Lambda(n)$ be the
Grassmann superalgebra with   generators $\xi_{1},\ldots,\xi_{n}$.
For a $k$-shuffle $u:=(i_{1},i_{2},\ldots,i_{k})$, that is, a
strictly increasing sequence between $1$ and $n$,  we write $|u|:=k$
and $x^{u}:=\xi_{i_{1}}\xi_{i_{2}} \cdots \xi_{i_{k}}.$ Letting
${\rm{deg}}\xi_{i}=1,\; i=1,\ldots,n,$ we obtain the standard
consistent $\mathbb{Z}$-grading of $\Lambda(n).$ Let us briefly
describe the
  Cartan  Lie superalgebras.

  \begin{itemize}\item
 $W(n)={\rm{der}}\Lambda(n)$ is $\mathbb{Z}$-graded,
$W(n)=\oplus_{k=-1}^{n-1}W(n)_{k},$ where
$$W(n)_{k}={\rm{span}}_{\mathbb{F}}\{
x^{u}\partial/\partial\xi_{i}\mid |u|=k+1,\; 1\leq i\leq n\}.$$
\end{itemize}
\begin{itemize}\item
$S(n)=\oplus_{k=-1}^{n-2}S(n)_{k}$ is a $\mathbb{Z}$-graded
subalgebra of $W(n)$, where
$$S(n)_{k}={\rm{span}}_{\mathbb{F}}\{\mathrm{D}_{ij}(x^{u})\mid |u|=k+2,\,\ 1\leq i,j\leq n\}.$$
 Hereafter, ${{\mathrm{D}_{ij}}}(f):=\partial(f)/
\partial\xi_{i}\partial/\partial\xi_{j}+\partial(f)/\partial\xi_{j}\partial/\partial\xi_{i}$
for $f\in \Lambda(n).$
\end{itemize}
\begin{itemize}\item $\widetilde{S}(2m)$ is a subalgebra of $W(2m)$ and
as  a $\mathbb{Z}$-graded subspace,
$$\widetilde{S}(2m)=\oplus_{k=-1}^{2m-2}\widetilde{S}(2m)_{k}\,\ \mbox{
with}\,\ m\geq 2,$$ where
\begin{eqnarray*}&&\widetilde{S}(2m)_{-1}={\rm{span}}_{\mathbb{F}}\{(1+\xi_{1}\cdots\xi_{2m})\partial/\partial\xi_{j}\mid
1\leq j\leq 2m\},\\&&\widetilde{S}(2m)_{k}=S(2m)_{k},\; 0\leq k\leq
2m-2.\end{eqnarray*} Notice that $\widetilde{S}(2m)$ is not a
$\mathbb{Z}$-graded subalgebra of $W(2m)$.
 \end{itemize}
 \begin{itemize}
 \item
 $H(n)=\oplus_{k=-1}^{n-3}H(n)_{k}$ is a
      $\mathbb{Z}$-graded
subalgebra of $W(n)$, where
$$H(n)_{k}={\rm{span}}_{\mathbb{F}}\{{\rm{D_{H}}}(x^{u})\mid|u|=k+2\}.$$
Hereafter, ${\rm{D_{H}}}$ is a linear mapping of $\Lambda(n)$ to
$W(n)$ such that
 ${\rm{D_{H}}}(f):=(-1)^{|f|}\sum_{i=1}^{n}\partial(f)/\partial\xi_{i}\partial/\partial\xi_{i'} $
  for  $f\in \Lambda(n),$ where
 $'$ is the
involution of the index set $\{1,\ldots,n\}$ satisfying that
$i'=i+[\frac{n}{2}]$ for
      $i\leq [\frac{n}{2}]$ and $n'=n$ if $n$ is odd. Here, $[\frac{n}{2}]$ is the biggest integer less than $\frac{n}{2}$ $(n\geq 5).$
 \end{itemize}

In the sequel, we write $W, S, \widetilde{S}, H$ instead of $W(n),
S(n), \widetilde{S}(2m), H(n),$ respectively. Throughout this
section  $L$ denotes one of the Cartan Lie superalgebras $W, S,
\widetilde{S},$ or $ H$. Consider its decomposition of subspaces:
\begin{equation}\label{eqcartangraded}
L=L_{-1}\oplus \cdots \oplus L_{s}.
\end{equation}
For $W, S, \widetilde{S}$ or $H$, the height $s=n-1,$ $n-2,$ $2m-2$
or $n-3,$ respectively. Note that $S $ and $H$ are
$\mathbb{Z}$-graded subalgebras of $W$ with respect to
(\ref{eqcartangraded}), but $\widetilde{S}$ is not. The null $L_{0}$
is isomorphic to $\mathfrak{gl}(n),
\mathfrak{sl}(n),\mathfrak{sl}(2m),\mathfrak{so}(n)$ for $L=W, S,
\widetilde{S}, H,$ respectively.

From \cite{Kac,MS}, we can write down the following facts:
\begin{lemma}\label{lem-Cartan-component} Keep  notations as above.

\begin{itemize}
\item[$\mathrm{(1)}$] The subspace $L_{-1}$
is an irreducible $L_{0}$-module.

\item[$\mathrm{(2)}$] A Cartan Lie superalgebra $L$ is generated by  the local part
$L_{-1}\oplus L_{0}\oplus L_{1}.$

\item[$\mathrm{(3)}$] The subspace $L_{1}$
is an irreducible $L_{0}$-module for $L=S,\widetilde{S}$ or $H,$
except for $H(6).$ For $L=H(6)$ or $W,$ the subspace $L_1$ is a
direct sum of two irreducible $L_0$-submodules.
\end{itemize}
\end{lemma}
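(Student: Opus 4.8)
\medskip
\noindent\textbf{Sketch of a proof.} All three assertions are standard structural facts and, as indicated, may simply be cited from \cite{Kac,MS}; here is the scheme I would use to recover them from the explicit models above. For part~(1): in each of $W,S,\widetilde S,H$ the subspace $L_{-1}$ is spanned by the $\partial/\partial\xi_{i}$ (for $\widetilde S$ one uses instead the twisted vectors $(1+\xi_{1}\cdots\xi_{2m})\partial/\partial\xi_{j}$, but the computation $[\sum_{p,q}a_{pq}\xi_{q}\partial/\partial\xi_{p},\,(1+\xi_{1}\cdots\xi_{2m})\partial/\partial\xi_{j}]=-\sum_{p}a_{pj}(1+\xi_{1}\cdots\xi_{2m})\partial/\partial\xi_{p}$, valid when $\sum_{p}a_{pp}=0$, shows the $L_{0}$-action on them is the same as on the $\partial/\partial\xi_{j}$). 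From $[\xi_{a}\partial/\partial\xi_{b},\,\partial/\partial\xi_{c}]=-\delta_{ac}\,\partial/\partial\xi_{b}$, together with $\mathrm{D}_{\mathrm{H}}(\xi_{i})=-\partial/\partial\xi_{i'}$ in the $H$-case, one reads off that $L_{-1}$ is, as an $L_{0}$-module, (a twist of) the standard module of $\mathfrak{gl}(n),\mathfrak{sl}(n),\mathfrak{sl}(2m)$ or the vector module of $\mathfrak{so}(n)$; all of these are irreducible, which is~(1).

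For part~(2), put $L'=\langle L_{-1}\oplus L_{0}\oplus L_{1}\rangle$, a graded subalgebra with $L'_{k}=L_{k}$ for $k\le 1$; it therefore suffices to prove $L_{k+1}=[L_{1},L_{k}]$ for $1\le k\le s-1$ and then conclude $L'=L$ by induction on $k$. For $W$ this is immediate on the monomial basis: given a basis vector $\xi_{i_{1}}\cdots\xi_{i_{k+2}}\,\partial/\partial\xi_{d}$ of $L_{k+1}$, pick $p\neq q$ in $\{i_{1},\dots,i_{k+2}\}$ with $d\notin\{p,q\}$ (possible since $k+2\ge 3$); then
$$
\xi_{i_{1}}\cdots\xi_{i_{k+2}}\,\partial/\partial\xi_{d}=\pm\bigl[\,\xi_{p}\xi_{q}\,\partial/\partial\xi_{p},\ \xi_{i_{1}}\cdots\widehat{\xi_{q}}\cdots\xi_{i_{k+2}}\,\partial/\partial\xi_{d}\,\bigr]\in[L_{1},L_{k}].
$$
For $S$ and $H$ one runs the same induction with the spanning sets $\{\mathrm{D}_{ij}(x^{u})\}$ resp.\ $\{\mathrm{D}_{\mathrm{H}}(x^{u})\}$ and the bracket identities for $[\mathrm{D}_{ij}(f),\mathrm{D}_{kl}(g)]$ resp.\ $[\mathrm{D}_{\mathrm{H}}(f),\mathrm{D}_{\mathrm{H}}(g)]$ on $\Lambda(n)$; the only difference is that in the analogue of the displayed identity the indices must be chosen so that every term lies in the divergence-free resp.\ Hamiltonian subalgebra. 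The algebra $\widetilde S$ requires nothing new, since $\widetilde S_{k}=S_{k}$ for $k\ge 0$ while $\widetilde S_{-1}$ sits in the local part by definition. (Equivalently, (2) expresses that these depth-one graded Lie superalgebras are generated in degrees $-1,0,1$, which follows from transitivity of the grading.)

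For part~(3), the assignment $f\mapsto f\,\partial/\partial\xi_{i}$ identifies $W_{1}$ with $\Lambda^{2}V^{*}\otimes V$, where $V=\mathrm{span}\{\partial/\partial\xi_{i}\mid 1\le i\le n\}$; the $\mathfrak{gl}(V)$-equivariant contraction $\Lambda^{2}V^{*}\otimes V\to V^{*}$ is surjective with irreducible kernel, so $W_{1}$ is a direct sum of two non-isomorphic irreducible $L_{0}$-modules --- the assertion for $L=W$. For $S$ (hence also $\widetilde S$, as $\widetilde S_{1}=S_{1}$) the subspace $L_{1}$ is precisely the kernel of the divergence $W_{1}\to\Lambda(n)_{1}\cong V^{*}$, i.e.\ the irreducible traceless summand, now viewed over $\mathfrak{sl}(V)$. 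For $H(n)$ the map $\mathrm{D}_{\mathrm{H}}$ is injective on $\Lambda(n)_{3}$ and identifies $H_{1}$ with $\Lambda^{3}$ of the vector module of $\mathfrak{so}(n)$; by the standard splitting behaviour of exterior powers of the vector representation, $\Lambda^{3}$ is irreducible over $\mathfrak{so}(n)$ unless $3=n/2$, and for $n=6$ it breaks into its two (anti-)self-dual halves, which gives the remaining statements.

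The step I expect to be the main obstacle is part~(3): the decomposition of $L_{1}$ is genuinely non-uniform across the four families, and the exceptional behaviour at $H(6)$ hinges on the precise fact that $\Lambda^{n/2}$ of the vector representation of $\mathfrak{so}(n)$ fails to be irreducible, which has to be invoked (or proved) separately. A secondary nuisance is carrying out the bracket computations of part~(2) explicitly, with correct signs, inside the subalgebras $S$ and $H$. Since all of this is already contained in \cite{Kac,MS}, for the purposes of this paper it is legitimate to take Lemma~\ref{lem-Cartan-component} as stated.
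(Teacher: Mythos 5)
The paper does not actually prove this lemma: it is introduced with the words ``From \cite{Kac,MS}, we can write down the following facts'', i.e.\ the three statements are imported from Kac's classification paper and Scheunert's notes, exactly as you say is legitimate in your closing sentence. Your sketch is therefore not competing with an argument in the text; it is a reconstruction from the explicit realizations, and it is essentially correct. The identification of $L_{-1}$ with (a twist of) the natural module in (1), the induction $L_{k+1}=[L_{1},L_{k}]$ in (2), and the identifications $W_{1}\cong\Lambda^{2}V^{*}\otimes V$, $S_{1}=\ker(\mathrm{div})$, $H(n)_{1}\cong\Lambda^{3}(\mbox{vector module})$ with the exceptional splitting of $\Lambda^{n/2}$ at $n=6$ in (3) are the right ingredients, and they match the weight-set computations the authors record immediately after the lemma (their lists of $\Delta_{\pm1}$ and the two-component structure of $\Delta_{1}$ for $W$ and $H(6)$).

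Two small caveats, neither fatal. First, your parenthetical claim that (2) ``follows from transitivity of the grading'' is too quick: transitivity controls brackets against $L_{-1}$, not the surjectivity of $[L_{1},L_{k}]\to L_{k+1}$; one needs either the explicit computation you carry out for $W$ (and its analogues for $S$ and $H$, which you only gesture at and which involve the fiddlier identities for $[\mathrm{D}_{ij}(f),\mathrm{D}_{kl}(g)]$ and $[\mathrm{D}_{\mathrm{H}}(f),\mathrm{D}_{\mathrm{H}}(g)]$) or transitivity combined with simplicity. Second, in (3) the case $n=5$ deserves a word: there $3>n/2$, so the blanket phrase ``irreducible unless $3=n/2$'' silently invokes the duality $\Lambda^{k}\cong\Lambda^{n-k}$ (for $\mathfrak{so}(5)$ one has $\Lambda^{3}\cong\Lambda^{2}$, the adjoint module, which is irreducible). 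With those points patched, your sketch is a sound proof of the lemma; for the purposes of this paper the citation of \cite{Kac,MS}, which is all the authors give, is indeed sufficient.
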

The following is a list of bases of the standard Cartan subalgebras
$\mathfrak{h}_{L_0} $ of
$L_{0}.$\\

~~~~~~~~~~~~\begin{tabular}{|l|l|}
\multicolumn{2}{c}{Table 2.1} \\[1pt]\hline
\multicolumn{1}{|c|}{$L$}&\multicolumn{1}{|c|}{$\mathfrak{h}_{L_0}$}\\
\hline
 ~~$W(n)$&
~~~$\xi_{i}\partial/\partial\xi_{i},$ $1\leq
i\leq n$\\
\hline ~~$S(n)$&~~~$
\xi_{1}\partial/\partial\xi_{1}-\xi_{j}\partial/\partial\xi_{j},$
$2\leq j\leq n$\\
\hline ~~$\widetilde{S}(2m)$&~~~$
\xi_{1}\partial/\partial\xi_{1}-\xi_{j}\partial/\partial\xi_{j},$
$2\leq j\leq 2m$\\
\hline
~~$H(2m)$&~~~$\xi_{i}\partial/\partial\xi_{i}-\xi_{i'}\partial/\partial\xi_{i'},$
$1\leq i\leq m$\\
\hline
~~$H(2m+1)$&~~~$\xi_{i}\partial/\partial\xi_{i}-\xi_{i'}\partial/\partial\xi_{i'},$
$1\leq i\leq m$\\
\hline
\end{tabular}\\

The weight space decomposition of the subspace $L_k$ relative to
   $\mathfrak{h}_{L_0}$ is:
$$L_{k}=\delta_{k,0}\mathfrak{h}_{L_0}\oplus_{\alpha\in
\Delta_{k}}L_{k}^{\alpha},\,\ \mbox{where}\,\ -1\leq k\leq s.$$ We
write down the following weight sets which will be used in the proof
of the following Lemma \ref{lem-cartan-weight-information}. For
$W(n)$, define $\varepsilon_{i}^{w}$ to be the linear function on
$\mathfrak{h}_{W_0}$ by
$$\varepsilon_{i}^{w}(\xi_{j}\partial/\partial\xi_{j})=\delta_{ij},\;
1\leq i,j\leq n.$$ We have
\begin{eqnarray}
&&\Delta_{-1}=\{-\varepsilon_{j}^{w}\mid 1\leq j\leq
n\},\nonumber\\
&&\Delta_{1}=\big\{\varepsilon_{k}^{w}+\varepsilon_{l}^{w}-\varepsilon_{j}^{w}\mid
1\leq k\neq l,j\leq n\big\}.\label{eq926}
\end{eqnarray}
For $S(n)$ and $\widetilde{S}(n),$ define $\varepsilon_{i}^{s}$ to
be the linear function on $\mathfrak{h}_{S_{0}}$  by
$$\varepsilon_{i}^{s}(\xi_{1}\partial/\partial\xi_{1}-\xi_{j}\partial/\partial\xi_{j})=\delta_{ij},\;
2\leq i,j\leq n$$ and write
$\varepsilon_{1}^{s}:=\sum_{l=2}^{n}\varepsilon_{l}^{s}.$ We have
\begin{eqnarray*}
&&\Delta_{-1}=\{\varepsilon_{j}^{s}\mid 1\leq j\leq
n\},\\
&&\Delta_{1}=\big\{\varepsilon_{k}^{s}+\varepsilon_{l}^{s}-\varepsilon_{j}^{s}\mid
1\leq k\neq l,j\leq n\big\}.
\end{eqnarray*} For $H(2m)$, define $\varepsilon_{i}^{h}$ to be the linear function on
$\mathfrak{h}_{H_{0}}$  by
$$\varepsilon_{i}^{h}(\xi_{j}\partial/\partial\xi_{j}-\xi_{j'}\partial/\partial\xi_{j'})=\delta_{ij},\;
 1\leq i,j\leq m.$$ We have
\begin{eqnarray}
&&\Delta_{-1}=\{\ \pm\varepsilon_{j}^{h}\mid 1\leq j\leq
m\},\nonumber\\
&&\Delta_{1}=\{\pm(\varepsilon_{i}^{h}+\varepsilon_{j}^{h})\pm\varepsilon_{k}^{h},
\pm(\varepsilon_{i}^{h}-\varepsilon_{j}^{h})\pm\varepsilon_{k}^{h}\mid
1\leq i< j< k\leq
m\}\nonumber\\&&~~~~~~~~\cup\{\pm\varepsilon_{l}^{h}\mid 1\leq l\leq
m\}.\label{eqheven23}
\end{eqnarray}
For $H(2m+1)$, define $\varepsilon_{i}^{h'}$ to be the linear
function on $\mathfrak{h}_{H_{0}}$ by
$$\varepsilon_{i}^{h'}(\xi_{j}\partial/\partial\xi_{j}-\xi_{j'}\partial/\partial\xi_{j'})=\delta_{ij},\;
 1\leq i,j\leq m.$$  We have
\begin{eqnarray*}
&&\Delta_{-1}=\{0\}\cup \{\pm\varepsilon_{i}^{h'}\mid 1\leq i\leq m\},\\
&&\Delta_{1}=\{0\}\cup\{\pm\varepsilon_{l}^{h'},\pm(\varepsilon_{i}^{h'}+\varepsilon_{j}^{h'}),\pm(\varepsilon_{i}^{h'}-\varepsilon_{j}^{h'})\mid
1\leq l\leq m, 1\leq i< j\leq m\}\\
&&~~~~~~~~\cup\{\pm(\varepsilon_{i}^{h'}+\varepsilon_{j}^{h'})\pm\varepsilon_{k}^{h'},
\pm(\varepsilon_{i}^{h'}-\varepsilon_{j}^{h'})\pm\varepsilon_{k}^{h'})\mid
1\leq i< j< k\leq m\}.\end{eqnarray*}

  For
$L_{1}=H(6)_{1}$ or $W_{1},$ then by Lemma
\ref{lem-Cartan-component}(3),
 $L_{1}$ is a direct sum of two irreducible $L_{0}$-modules
$$L_{1}=L_{1}^{1}\oplus L_{1}^{2}.$$ Let $\Delta_{1}^{i}$
be the weight set of $L_{1}^{i},$ $i=1,2.$
\begin{lemma}\label{lem-cartan-weight-information}With the above
nonations, we have the following properties: ~
\begin{itemize}
\item[$\mathrm{(1)}$]
If $L=W$ then $\Delta_{-1}\cap \Delta_{1} =\emptyset.$
\item[$\mathrm{(2)}$]
If  $L=S$ or $\widetilde{S}$   then $\Delta_{-1}\cap
\Delta_{1}=\emptyset.$
\item[$\mathrm{(3)}$]
If  $L=H$  then  $\Delta_{-1}\neq \Delta_{1}.$
\item[$\mathrm{(4)}$] If $L=H(6)$ or $W,$
 there exist nonzero weights $\alpha_{1}^{i}\in \Delta_{1}^{i} $
such that $\alpha_{1}^{1}\neq \alpha_{1}^{2}.$
\end{itemize}
\end{lemma}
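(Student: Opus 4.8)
\emph{Proof plan.} I would verify all four parts by inspecting the weight sets displayed above, using throughout that the functionals $\varepsilon^{w}_{i}$ (resp.\ $\varepsilon^{h}_{i}$, $\varepsilon^{h'}_{i}$) are linearly independent, being dual to the listed basis of the relevant Cartan subalgebra, whereas for $S$ and $\widetilde{S}$ one has the single relation $\varepsilon^{s}_{1}=\sum_{l=2}^{n}\varepsilon^{s}_{l}$, equivalently $\mathfrak{h}_{S_{0}}=\ker(\mathrm{tr})$ inside $\mathfrak{h}_{W_{0}}$, where $\mathrm{tr}:=\sum_{i=1}^{n}\varepsilon^{w}_{i}$. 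Part (1) is then immediate: let $\mathcal{E}:=\sum_{i=1}^{n}\xi_{i}\partial/\partial\xi_{i}\in\mathfrak{h}_{W_{0}}$, so that $\varepsilon^{w}_{i}(\mathcal{E})=1$ for every $i$; hence $(-\varepsilon^{w}_{j})(\mathcal{E})=-1$ while $(\varepsilon^{w}_{k}+\varepsilon^{w}_{l}-\varepsilon^{w}_{j})(\mathcal{E})=1$, and since $\mathcal{E}\in\mathfrak{h}_{W_{0}}$ this already separates $\Delta_{-1}$ from $\Delta_{1}$.

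Part (2) is the crux, because now $\mathcal{E}\notin\mathfrak{h}_{S_{0}}$ and the $\varepsilon^{s}_{i}$ are linearly dependent, so disjointness cannot be read off term by term. I would transfer the question to $W(n)$: if $\gamma\in\Delta_{-1}\cap\Delta_{1}$, pick a weight $\widetilde{\gamma}_{-1}=-\varepsilon^{w}_{j}$ of $W(n)_{-1}$ and a weight $\widetilde{\gamma}_{1}=\varepsilon^{w}_{k}+\varepsilon^{w}_{l}-\varepsilon^{w}_{p}$ of $W(n)_{1}$ whose restrictions to $\mathfrak{h}_{S_{0}}$ both equal $\gamma$ (such lifts exist because $L_{-1}\subseteq W(n)_{-1}$ and $L_{1}\subseteq W(n)_{1}$ for $L=S$, and for $L=\widetilde{S}$ the vectors $(1+\xi_{1}\cdots\xi_{2m})\partial/\partial\xi_{j}$, $1\leq j\leq 2m$, spanning $\widetilde{S}(2m)_{-1}$ still have $\mathfrak{h}_{S_{0}}$-weights $-\varepsilon^{w}_{j}|_{\mathfrak{h}_{S_{0}}}$). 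Then $\widetilde{\gamma}_{-1}-\widetilde{\gamma}_{1}$ vanishes on $\mathfrak{h}_{S_{0}}=\ker(\mathrm{tr})$, hence equals $c\,\mathrm{tr}$ for some scalar $c$; evaluating on $\mathcal{E}$ gives $cn=\widetilde{\gamma}_{-1}(\mathcal{E})-\widetilde{\gamma}_{1}(\mathcal{E})=-1-1=-2$, so $c=-2/n$. But $\widetilde{\gamma}_{-1}$, $\widetilde{\gamma}_{1}$ and $\mathrm{tr}$ all have integer coordinates in the basis $\{\varepsilon^{w}_{i}\}$, so $2/n\in\mathbb{Z}$, contradicting $n\geq 4$ (resp.\ $2m\geq 4$). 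Hence $\Delta_{-1}\cap\Delta_{1}=\emptyset$.

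Parts (3) and (4) are inspection. For (3) it suffices to exhibit one element of $\Delta_{1}\setminus\Delta_{-1}$: for $H(2m)$ (so $m\geq 3$), the weight $\varepsilon^{h}_{1}+\varepsilon^{h}_{2}+\varepsilon^{h}_{3}\in\Delta_{1}$ is not of the form $\pm\varepsilon^{h}_{j}$ by linear independence, hence $\notin\Delta_{-1}$; for $H(2m+1)$ (so $m\geq 2$), the weight $\varepsilon^{h'}_{1}+\varepsilon^{h'}_{2}\in\Delta_{1}$ lies outside $\{0\}\cup\{\pm\varepsilon^{h'}_{i}\}=\Delta_{-1}$; so $\Delta_{-1}\neq\Delta_{1}$. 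For (4), note that $0\notin\Delta_{1}$ both for $W$ (an identity $\varepsilon^{w}_{k}+\varepsilon^{w}_{l}-\varepsilon^{w}_{j}=0$ with $k\neq l$ is impossible) and for $H(6)$ (no $0$ occurs in the $H(2m)$-list), and that $\Delta_{1}=\Delta_{1}^{1}\cup\Delta_{1}^{2}$ contains two distinct weights: e.g.\ $\varepsilon^{w}_{1}+\varepsilon^{w}_{2}-\varepsilon^{w}_{3}$ and $\varepsilon^{w}_{1}+\varepsilon^{w}_{3}-\varepsilon^{w}_{2}$ for $W$ (with $n\geq 3$), and $\varepsilon^{h}_{1}+\varepsilon^{h}_{2}+\varepsilon^{h}_{3}$ and $\varepsilon^{h}_{1}+\varepsilon^{h}_{2}-\varepsilon^{h}_{3}$ for $H(6)$. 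Since each $L_{1}^{i}$ is nonzero, each $\Delta_{1}^{i}\neq\emptyset$; if every pair $(\alpha_{1}^{1},\alpha_{1}^{2})\in\Delta_{1}^{1}\times\Delta_{1}^{2}$ satisfied $\alpha_{1}^{1}=\alpha_{1}^{2}$, then $\Delta_{1}^{1}=\Delta_{1}^{2}$ would be a single weight and $|\Delta_{1}|=1$, contradicting the previous sentence; hence nonzero $\alpha_{1}^{i}\in\Delta_{1}^{i}$ with $\alpha_{1}^{1}\neq\alpha_{1}^{2}$ exist.

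The main obstacle is part (2): because the standard Cartan subalgebra of $S$ and $\widetilde{S}$ is the traceless part, the $\mathbb{Z}$-degree is invisible inside $\mathfrak{h}_{S_{0}}$ and the defining weight functionals are linearly dependent, so the disjointness $\Delta_{-1}\cap\Delta_{1}=\emptyset$ cannot be detected directly among the listed weights; the content lies in lifting the weights to $W(n)$, where they live in an honest lattice with $\mathfrak{h}_{S_{0}}=\ker(\mathrm{tr})$, and then invoking integrality. Some care is also needed throughout that the index conventions in the displayed weight sets (for instance whether $j$ may equal $k$ or $l$) are read correctly, but this is bookkeeping only.
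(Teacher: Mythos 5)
Your proposal is correct, and for parts (1), (3) and (4) it is essentially the paper's own argument: the paper disposes of (1)--(3) with ``all the statements follow directly from the above computations'' and proves (4) exactly as you do, from $0\notin\Delta_{1}$ and $|\Delta_{1}|>1$ (your extra remark that two nonempty sets $\Delta_{1}^{1},\Delta_{1}^{2}$ with no distinct pair across them would force $|\Delta_{1}|=1$ is the same deduction, spelled out). The genuine difference is part (2). The paper offers nothing beyond ``follows directly,'' and you are right that it does not: the functionals $\varepsilon_{i}^{s}$ satisfy the relation $\varepsilon_{1}^{s}=\sum_{l=2}^{n}\varepsilon_{l}^{s}$, and a literal term-by-term reading of the displayed set $\{\varepsilon_{k}^{s}+\varepsilon_{l}^{s}-\varepsilon_{j}^{s}\}$ (e.g.\ taking $j=l$) would even appear to meet $\Delta_{-1}=\{\varepsilon_{j}^{s}\}$, so some argument is required. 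Your route --- lift every $\mathfrak{h}_{S_{0}}$-weight of $L_{\pm1}$ to an $\mathfrak{h}_{W_{0}}$-weight of $W(n)_{\pm1}$ (checking separately that the non-homogeneous basis vectors $(1+\xi_{1}\cdots\xi_{2m})\partial/\partial\xi_{j}$ of $\widetilde{S}_{-1}$ still restrict to $-\varepsilon_{j}^{w}|_{\mathfrak{h}_{S_{0}}}$), observe that two lifts with equal restriction differ by a multiple of $\mathrm{tr}$, and rule out the multiple $-2/n$ by integrality of the coordinates and $n\geq4$ --- is sound and is, in effect, the missing proof of (2). What it buys is a uniform argument that avoids case analysis over which indices equal $1$, and it makes transparent why the claim fails to be ``term by term'' in the first place. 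The remaining verifications (the Euler element separating $\Delta_{-1}$ from $\Delta_{1}$ for $W$, the witnesses $\varepsilon_{1}^{h}+\varepsilon_{2}^{h}+\varepsilon_{3}^{h}$ and $\varepsilon_{1}^{h'}+\varepsilon_{2}^{h'}$ for (3)) are correct and consistent with the constraints $n\geq3$, $n\geq5$, $m\geq2$ in the relevant families.
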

\begin{proof}
All the statements follow directly from the above computations
except (4) for $L=H(6)$
  or $W.$
  In this case, from (\ref{eq926}) and (\ref{eqheven23})  one sees that $0\notin \Delta_{1} $ and $|\Delta_{1}|>1$.
  Consequently,  (4) holds.
\end{proof}
For $L=W,S,\widetilde{S}$ or $H,$  fix the corresponding standard
Cartan subalgebra, respectively.
\begin{lemma}\label{zero}
 For $L,$ there exists a weight vector $ x_{1}\in
L_{1}^{\alpha}$ for some $\alpha\in \Delta_{1}$ such that
$[x_{1},x_{1}]=0.$
\end{lemma}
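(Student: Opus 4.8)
The statement asks for a single odd weight vector $x_1$ in some weight space $L_1^\alpha$ of the null-degree-$1$ component whose super-bracket with itself vanishes. The natural approach is to exhibit such an $x_1$ explicitly in each of the four families $W$, $S$, $\widetilde{S}$, $H$, using the explicit descriptions of $L_1$ given above. The plan is to pick a basis vector of $L_1$ of a convenient shape: for $W(n)$, a candidate is of the form $x^u\,\partial/\partial\xi_i$ with $|u|=2$; for $S(n)$ and $\widetilde{S}(2m)$, a candidate is $\mathrm{D}_{ij}(x^u)$ with $|u|=3$; for $H(n)$, a candidate is $\mathrm{D_H}(x^u)$ with $|u|=3$. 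In each case one computes $[x_1,x_1]$ directly from the Lie-bracket of vector fields on $\Lambda(n)$ and chooses the indices so that the result is zero.

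\textbf{Key steps.} First, for $W(n)$ (recall $n\ge 3$), I would take $x_1 = \xi_j\xi_k\,\partial/\partial\xi_i$ with $i,j,k$ pairwise distinct (possible since $n\ge 3$). Since this is a single monomial vector field with the derivation $\partial/\partial\xi_i$ not hitting any $\xi$ appearing in the coefficient $\xi_j\xi_k$, the super-bracket $[x_1,x_1] = 2\,x_1(\xi_j\xi_k)\,\partial/\partial\xi_i = 0$ because $\partial/\partial\xi_i(\xi_j\xi_k)=0$; this vector is a weight vector for weight $\varepsilon_j^w+\varepsilon_k^w-\varepsilon_i^w\in\Delta_1$. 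Second, for $S(n)$ with $n\ge 4$ and for $\widetilde{S}(2m)$ with $m\ge2$ (so the ambient dimension is $\ge4$), I would take $x_1 = \mathrm{D}_{ij}(\xi_i\xi_j\xi_k) = \xi_k(\partial/\partial\xi_i\cdot\partial/\partial\xi_j - \partial/\partial\xi_j\cdot\partial/\partial\xi_i)$ — more precisely one checks $\mathrm{D}_{ij}(\xi_i\xi_j\xi_k)$ is, up to sign, $\xi_k\,\partial/\partial\xi_i\wedge$-type but reduces to a sum $\pm\xi_k(\xi_{?}\partial/\partial\xi_{?})$ forms; the cleaner choice is to take $u$ with $i,j\notin u$, e.g. $x_1=\mathrm{D}_{ij}(\xi_k\xi_l\xi_m)$ with $\{i,j\}\cap\{k,l,m\}=\emptyset$, which expands into monomial vector fields $x^{v}\,\partial/\partial\xi_i,\ x^{w}\,\partial/\partial\xi_j$ where none of the derivations hit their own coefficients, so again the self-bracket is $0$. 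Third, for $H(n)$ with $n\ge 5$, I would take $x_1 = \mathrm{D_H}(\xi_a\xi_b\xi_c)$ where the three indices $a,b,c$ are chosen with $a'\notin\{a,b,c\}$, $b'\notin\{a,b,c\}$, $c'\notin\{a,b,c\}$ and moreover so that the resulting vector field $\mathrm{D_H}(\xi_a\xi_b\xi_c) = \sum_i \pm\,\partial/\partial\xi_i(\xi_a\xi_b\xi_c)\,\partial/\partial\xi_{i'}$ has each monomial-coefficient killed by its accompanying derivation; the dimension bound $n\ge5$ guarantees enough indices to make all these disjointness conditions hold simultaneously. In every case one then reads off from Table 2.1 and the weight lists above that $x_1$ lies in a genuine weight space $L_1^\alpha$ with $\alpha\in\Delta_1$.

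\textbf{Main obstacle.} The only delicate family is $H(n)$, because $\mathrm{D_H}$ intertwines the index $i$ with $i'$, so a monomial $x^u$ produces a vector field whose various terms involve derivations $\partial/\partial\xi_{i'}$ paired with coefficients that may themselves contain $\xi_{i'}$; one must choose the shuffle $u$ carefully (avoiding all primed partners of indices in $u$) to force every cross-term in $[x_1,x_1]$ to vanish. For odd $n$ there is also the index $n$ with $n'=n$, which must be excluded from $u$. Verifying that $n\ge5$ leaves room for such a choice, and checking the sign bookkeeping in the expansion of $\mathrm{D_H}(x^u)$, is the technical heart of the proof; the other three cases are essentially immediate from the monomial form of the generators.
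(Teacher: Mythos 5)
Your overall strategy is exactly the paper's: in each family exhibit an explicit degree-one weight vector of monomial type and check that its self-bracket vanishes. The $W$ case is correct and identical to the paper's choice $\xi_j\xi_k\,\partial/\partial\xi_i$ with $i,j,k$ pairwise distinct. But the execution contains two genuine errors. First, for $S$ and $\widetilde S$ your ``cleaner choice'' $x_1=\mathrm{D}_{ij}(\xi_k\xi_l\xi_m)$ with $\{i,j\}\cap\{k,l,m\}=\emptyset$ is the \emph{zero} element: by definition $\mathrm{D}_{ij}(f)=\partial(f)/\partial\xi_i\,\partial/\partial\xi_j+\partial(f)/\partial\xi_j\,\partial/\partial\xi_i$, and both partial derivatives of $\xi_k\xi_l\xi_m$ with respect to $\xi_i$ and $\xi_j$ vanish. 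One must keep $i,j$ inside $u$, i.e.\ take the paper's element $\mathrm{D}_{ij}(\xi_i\xi_j\xi_k)=\pm\big(\xi_j\xi_k\,\partial/\partial\xi_j-\xi_i\xi_k\,\partial/\partial\xi_i\big)$ --- a first-order derivation, not the second-order operator $\xi_k(\partial_i\partial_j-\partial_j\partial_i)$ you wrote down. Its square is indeed zero, but not because ``no derivation hits its own coefficient'' (here $\partial/\partial\xi_j$ does hit $\xi_j\xi_k$); it is because every surviving term of $D^2$ picks up a repeated Grassmann factor $\xi_k\xi_k=0$. That short computation is precisely what your write-up skips.

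Second, for $H$ your sufficient condition ``choose $a,b,c$ with $a',b',c'\notin\{a,b,c\}$'' cannot be met when $n=5$: the index set $\{1,\dots,5\}$ decomposes into the pairs $\{1,1'\}$, $\{2,2'\}$ and the fixed point $5=5'$, so every $3$-subset either contains $5$ or contains a complete pair. Hence the claim that ``$n\ge 5$ leaves room for such a choice'' fails at the smallest case. The lemma is still true for $H(5)$ --- for instance $D=\mathrm{D_H}(\xi_1\xi_2\xi_5)$ satisfies $[D,D]=0$ because the only potentially nonzero terms of $D^2$ acquire a factor $\xi_1^2$ or $\xi_2^2$ --- but that is exactly the Grassmann-relation argument your disjointness condition was designed to avoid, so the ``technical heart'' you defer is where the proof actually has to live. (For what it is worth, the paper's own prescription, $1\le i,j,k\le[\frac n2]$ pairwise distinct, suffers the same defect at $n=5$, where $[\frac n2]=2$; the repair is the same.)
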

\begin{proof}
It is easy to see that a standard basis vector of $L_{1}$ is also a
weight vector for some weight $\alpha\in \Delta_{1}.$ For $L=W,$ we
have
$$[x_{j}x_{k}\partial/\partial\xi_{i},x_{j}x_{k}\partial/\partial\xi_{i}]=0$$
where $1\leq i,j,k\leq n $ and
$i,j,k$ are pairwise distinct. For $L=S$ or $\widetilde{S},$ we have
$$[\mathrm{D}_{ij}(x_{i}x_{j}x_{k}),\mathrm{D}_{ij}(x_{i}x_{j}x_{k})]=0$$
where $1\leq i,j,k\leq n $ and $i,j,k$ are pairwise distinct. For
$L=H$, we have
$$[\mathrm{D}_{{\rm{H}}}(x_{i}x_{j}x_{k}),\mathrm{D}_{\rm{H}}(x_{i}x_{j}x_{k})]=0$$
where $1\leq i,j,k\leq [\frac{n}{2}]$ and $ i,j,k $ are pairwise
distinct.
\end{proof}

Recall that the null $L_{0}$ is isomorphic to $\mathfrak{gl}(n),
\mathfrak{sl}(n),\mathfrak{sl}(2m),\mathfrak{so}(n)$ for $L=W, S,
\widetilde{S}$ or $H,$ respectively.

    Let $\mathfrak{g}$ be a simple Lie algebra. If $\mathfrak{h}$ is a Cartan subalgebra of
$\mathfrak{g}$,
 $x\in\mathfrak{g}$ is called $\mathfrak{h}$-\textit{balanced}
 provided that  $x^{\alpha}\neq0$ for  $\alpha\in \Phi,$ where $\Phi\subset \mathfrak{h}^{*}$ is the root
 system of $\mathfrak{g}$ relative to $\mathfrak{h}.$ The lemma below is abstracted from
 \cite{MBJ}:

\begin{lemma}
If $x$ is a non-zero element in $\mathfrak{g}$ then there exists
some Cartan subalgebra $\mathfrak{h}$  such that $x$ is
  $\mathfrak{h}$-balanced.
  \end{lemma}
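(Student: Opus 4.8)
The strategy is to fix one Cartan subalgebra and recast ``$x$ is balanced with respect to \emph{some} Cartan subalgebra'' as ``the adjoint-group orbit of $x$ meets a fixed Zariski-open set,'' and then to prove the latter using simplicity of $\mathfrak{g}$.

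First I would fix once and for all a Cartan subalgebra $\mathfrak{h}_{0}$ of $\mathfrak{g}$, with root system $\Phi \subset \mathfrak{h}_{0}^{*}$ and root space decomposition $\mathfrak{g}=\mathfrak{h}_{0}\oplus\bigoplus_{\alpha\in\Phi}\mathfrak{g}^{\alpha}$, and let $\pi_{\alpha}\colon\mathfrak{g}\to\mathfrak{g}^{\alpha}$ be the associated projections. Since each $\mathfrak{g}^{\alpha}\neq 0$, the subspace $H_{\alpha}:=\ker\pi_{\alpha}$ is proper in $\mathfrak{g}$, so the ``balanced locus'' $\mathcal{B}:=\mathfrak{g}\setminus\bigcup_{\alpha\in\Phi}H_{\alpha}$ is a nonempty Zariski-open subset. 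Let $G=\mathrm{Int}(\mathfrak{g})=(\mathrm{Aut}\,\mathfrak{g})^{\circ}$, a connected (hence irreducible) linear algebraic group. By Chevalley's conjugacy theorem every Cartan subalgebra of $\mathfrak{g}$ has the form $g\,\mathfrak{h}_{0}$ with $g\in G$, and its root spaces are exactly the $g\,\mathfrak{g}^{\alpha}$ ($\alpha\in\Phi$); hence $x$ is $g\,\mathfrak{h}_{0}$-balanced if and only if $g^{-1}x\in\mathcal{B}$. Thus the lemma is equivalent to the assertion that the orbit $G\cdot x$ meets $\mathcal{B}$.

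Next, fix $\alpha\in\Phi$ and consider the regular map $G\to\mathfrak{g}^{\alpha}$, $g\mapsto\pi_{\alpha}(gx)$; its nonvanishing locus $U_{\alpha}:=\{g\in G:\pi_{\alpha}(gx)\neq 0\}$ is Zariski-open in $G$. I claim $U_{\alpha}\neq\emptyset$: the linear span of $G\cdot x$ is stable under $G$, hence under $\mathrm{ad}\,\mathfrak{g}=\mathrm{Lie}\,G$, i.e.\ it is an ideal of $\mathfrak{g}$; it contains the nonzero vector $x$, so by simplicity it equals $\mathfrak{g}$. In particular $G\cdot x$ is not contained in the proper subspace $H_{\alpha}$, which is exactly the statement $U_{\alpha}\neq\emptyset$. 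Since $G$ is irreducible, each $U_{\alpha}$ is dense, and a finite intersection of nonempty open subsets of an irreducible variety is nonempty; picking $g\in\bigcap_{\alpha\in\Phi}U_{\alpha}$ gives $gx\in\mathcal{B}$, so $x$ is balanced with respect to the Cartan subalgebra $g^{-1}\mathfrak{h}_{0}$, and we are done.

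I expect the only substantive points to be (i) the reduction in the first step --- unwinding ``there exists a Cartan subalgebra'' into a statement about a single orbit and a single fixed open set, which rests on conjugacy of Cartan subalgebras and on the fact that automorphisms carry root spaces to root spaces --- and (ii) the observation that the $G$-orbit of a nonzero element of a simple Lie algebra spans the whole algebra, which is precisely what prevents the orbit from being swallowed by any $H_{\alpha}$. The remaining ingredients (openness of the relevant loci, irreducibility of $G$, and the elementary fact that an irreducible variety is not a finite union of proper closed subsets) are routine and involve no computation.
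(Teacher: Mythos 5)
Your argument is correct, but it takes a different route from the paper: the paper does not prove the key existence statement at all. It simply invokes the proof of Theorem 2.2.3 of Bois (reference \cite{MBJ}), which supplies an automorphism $\varphi$ with $\varphi(x)$ balanced relative to a fixed Cartan subalgebra $\mathfrak{h}'$, and then pulls back by setting $\mathfrak{h}=\varphi^{-1}(\mathfrak{h}')$. Structurally your reduction is the same (find $g$ with $gx$ in the balanced locus of a fixed $\mathfrak{h}_0$, then take $\mathfrak{h}=g^{-1}\mathfrak{h}_0$), but you actually prove the existence of such a $g$ from scratch: the orbit map $g\mapsto\pi_\alpha(gx)$ is regular, its nonvanishing locus $U_\alpha$ is nonempty because the span of $G\cdot x$ is a nonzero $\mathrm{ad}\,\mathfrak{g}$-stable subspace, hence equals $\mathfrak{g}$ by simplicity, and irreducibility of $G$ lets you intersect the finitely many $U_\alpha$. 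This is a clean and complete argument (and is essentially a reconstruction of the density argument underlying Bois's proof in characteristic $0$); what it buys is self-containedness, at the cost of invoking the conjugacy theorem for Cartan subalgebras and the identification $\mathrm{Lie}(G)=\mathrm{ad}\,\mathfrak{g}$ --- both standard over an algebraically closed field of characteristic $0$, which is the setting here. One cosmetic remark: for the direction you actually need, conjugacy of all Cartan subalgebras is not required; it only enters in your claim that the lemma is \emph{equivalent} to the orbit meeting $\mathcal{B}$, and the implication you use goes through without it.
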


\begin{proof}  Suppose $x$
is  a non-zero element of $\mathfrak{g}$ and let $\mathfrak{h}'$ be
a Cartan subalgebra of $\mathfrak{g}$. By the proof of \cite[Theorem
2.2.3]{MBJ}, there exists $\varphi\in \mathrm{Aut}\mathfrak{g}$ such
that $\varphi(x)$ is $\mathfrak{h}'$-balanced. Letting
$\mathfrak{h}=\varphi^{-1}(\mathfrak{h}')$, one sees that
$\mathfrak{h}$ is a Cartan subalgebra and $x$ is
$\mathfrak{h}$-balanced.
\end{proof}
 From
 \cite{MBJ}, we write down two useful facts:

  $\bullet$  If $x\in\mathfrak{g}$  is
  $\mathfrak{h}$-balanced,  then
  $\mathfrak{g}=\langle x, h\rangle$ for  $h\in
 \Omega_{\Phi}$, where $\Phi\subset \mathfrak{h}^{*}$ is the root
 system of $\mathfrak{g}$ relative to $\mathfrak{h}.$

 $\bullet$ If $\mathfrak{sl}(n)$ is generated
by an $\mathfrak{h}$-balanced element $x$ and an
  element $h$ in $\Omega_{\Phi}$, then
$\mathfrak{gl}(n)$ is generated by $x$ and $h+z$,   where $z$ is a
nonzero central element in $\mathfrak{gl}(n)$.

Note that the $\mathbb{Z}$-grading of a Cartan Lie superalgebra is
consistent with the $\mathbb{Z}_{2}$-grading  over $\mathbb{F}.$

\begin{theorem}\label{homogeneous generators cartan}
Any  Cartan  Lie superalgeba  is generated by 1 element.
\end{theorem}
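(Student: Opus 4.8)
The plan is to mimic the strategy already used for classical Lie superalgebras: produce a single homogeneous-looking element whose brackets with itself and iterated brackets recover the entire local part $L_{-1}\oplus L_0\oplus L_1$, and then invoke Lemma~\ref{lem-Cartan-component}(2) to conclude that this element generates all of $L$. Concretely, I would fix the standard Cartan subalgebra $\mathfrak{h}_{L_0}$ of the simple (or reductive, in the $W$ case) Lie algebra $L_0$, and build the candidate generator in the form $x = x_{-1} + h + x_{1}$, where $h\in\Omega_{\Phi}$ is a ``balanced'' Cartan element of $L_0$, $x_{-1}$ is the sum of one weight vector from each weight space of the $L_0$-module $L_{-1}$, and $x_1$ is the sum of carefully chosen weight vectors of $L_1$ — one per weight space in each irreducible summand. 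The first move is: since $[x,x]$ and $[h,[x,x]]$ etc.\ live in $L_{\bar 0}$ by parity, I would instead work with the odd element $y := x_{-1} + x_1$ directly, using $h$ to separate weight components via Lemma~\ref{lemmeigenvector} applied to $\mathrm{ad}\,h$.

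Here are the key steps in order. (i) Choose $h\in\mathfrak{h}_{L_0}$ generic enough that $\mathrm{ad}\,h$ has distinct eigenvalues on the weight vectors appearing in $x_{-1}$ and on those in $x_1$; the sets $\Delta_{-1}$ and $\Delta_1$ are multiplicity-free weight sets in the relevant cases (using Lemma~\ref{lem-cartan-weight-information}, which precisely records that $\Delta_{-1}\cap\Delta_1=\emptyset$ for $W,S,\widetilde S$ and $\Delta_{-1}\neq\Delta_1$ for $H$), so by Lemma~\ref{lemmeigenvector} every individual weight vector in $x_{-1}$ and $x_1$ lies in $\langle x\rangle$. (ii) By Lemma~\ref{lem-Cartan-component}(1), $L_{-1}$ is an irreducible $L_0$-module, so once I show $L_0\subset\langle x\rangle$, irreducibility forces $L_{-1}\subset\langle x\rangle$ from the single weight vector of $L_{-1}$ I already have. (iii) To get $L_0$: bracket weight vectors of $L_{-1}$ with weight vectors of $L_1$; these land in $L_0$, and I must arrange the choices so the resulting elements span $L_0$ — for the simple cases this follows from a root-space argument (the brackets $[L_{-1}^{\alpha},L_1^{\beta}]$ hit root vectors $L_0^{\alpha+\beta}$ and, by choosing enough pairs, all simple root vectors plus enough Cartan, hence all of $L_0$ via the analogue of (\ref{eq1739f})); for $L=W$ one extra nonzero central element of $\mathfrak{gl}(n)$ is needed, which is exactly supplied by the second ``useful fact'' quoted from \cite{MBJ}. (iv) Finally for $L_1$: by Lemma~\ref{lem-Cartan-component}(3) it is either irreducible or a sum of two irreducibles; in the irreducible case one weight vector suffices, and for $W$ and $H(6)$ Lemma~\ref{lem-cartan-weight-information}(4) guarantees I can pick weight vectors in distinct weight spaces of the two summands so that $\mathrm{ad}\,h$ separates them and each summand gets populated, after which $L_0$-irreducibility finishes it. Having $L_{-1}\oplus L_0\oplus L_1\subset\langle x\rangle$, Lemma~\ref{lem-Cartan-component}(2) gives $\langle x\rangle = L$.

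The subtle point, and the one I expect to be the main obstacle, is the non-graded case $L=\widetilde S(2m)$: here $\widetilde S$ is not a $\mathbb{Z}$-graded subalgebra of $W(2m)$, the degree $-1$ component is spanned by the twisted vectors $(1+\xi_1\cdots\xi_{2m})\partial/\partial\xi_j$ rather than plain $\partial/\partial\xi_j$, and brackets of these with degree-$1$ elements acquire correction terms from the top form $\xi_1\cdots\xi_{2m}$. I would handle this by noting that the $\mathbb{Z}_2$-grading is still honored, that $\mathrm{ad}\,h$ still acts diagonally with the same weights as for $S(2m)$ (so the separation-by-eigenvalues step is unaffected), and that the correction terms either vanish for weight reasons or themselves lie in components already shown to be in $\langle x\rangle$; essentially one runs the $S(n)$ computation and then checks that the extra terms do no harm. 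A secondary bookkeeping nuisance is that $H(2m)$ versus $H(2m+1)$ have slightly different root data and the special case $H(6)$ has a reducible $L_1$, so the $H$ case splits into a few sub-cases — but each is a finite explicit computation of the same flavor as the $\mathrm{A}(1,1)$ and $\mathrm{P}(3)$ computations already carried out in Section~1, producing the needed root vectors of $\mathfrak{so}(n)$ as brackets of odd weight vectors and then applying (\ref{eq1739f}).
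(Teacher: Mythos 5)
Your overall skeleton (recover the local part $L_{-1}\oplus L_0\oplus L_1$ and invoke Lemma~\ref{lem-Cartan-component}(2), use irreducibility of $L_{\pm 1}$, add a central element of $\mathfrak{gl}(n)$ for $W$) matches the paper, but the two steps that carry the actual weight do not go through as you describe. First, the separation step (i) fails: the weight spaces of $L_{\pm 1}$ relative to $\mathfrak{h}_{L_0}$ are \emph{not} multiplicity-free (e.g.\ for $W(n)$ the weight $\varepsilon_k^{w}+\varepsilon_l^{w}-\varepsilon_j^{w}$ with $l=j$ equals $\varepsilon_k^{w}$ and is realized by $\xi_k\xi_l\partial/\partial\xi_l$ for every $l\neq k$; for $H(2m)$ the weight $\varepsilon_k^{h}$ occurs in $L_1$ with multiplicity), and for $L=H$ the paper's Lemma~\ref{lem-cartan-weight-information}(3) only gives $\Delta_{-1}\neq\Delta_1$, not $\Delta_{-1}\cap\Delta_1=\emptyset$ --- in fact $\Delta_{-1}\subset\Delta_1$ there. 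So applying Lemma~\ref{lemmeigenvector} to $\mathrm{ad}\,h$ acting on the sum of \emph{all} odd weight vectors only returns the sums of components sharing an eigenvalue, not the individual vectors your step (iii) needs. Second, step (iii) --- ``arrange the choices so the resulting brackets span $L_0$'' --- is precisely the hard part and is left as an assertion; you never exhibit a mechanism that actually produces all of $L_0$ inside $\langle x\rangle$. (You also conflate the two conditions from \cite{MBJ}: $h\in\Omega_{\Phi}$ is a genericity condition on a Cartan element, while ``$\mathfrak{h}$-balanced'' is a condition on a general element having all root components nonzero; your $x$ contains the former but no element enjoying the latter.)

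The paper's proof avoids both problems by a different device that is absent from your proposal. It picks a \emph{single} weight vector $x_{-1}\in L_{-1}^{\alpha_{-1}}$ and a single $x_1\in L_1^{\alpha_1}$ (two for the reducible $L_1$ of $W$ and $H(6)$) with the finitely many weights involved pairwise distinct, and --- crucially --- with $[x_1,x_1]=0$ (Lemma~\ref{zero}) and $[x_{-1},x_1]\neq 0$. Then the square of the odd part of the generator manufactures the even element $x_0=[x_{-1}+x_1,x_{-1}+x_1]=2[x_{-1},x_1]$; after conjugating so that $x_0$ is $\mathfrak{h}'$-balanced for a suitable Cartan subalgebra $\mathfrak{h}'$, the generator is taken to be $x_{-1}+x_0+h'+x_1$ with $h'\in\Omega_{\Delta_0'}$, and $L_0=\langle x_0,h'\rangle$ follows from Bois's $1.5$-generation theorem for simple Lie algebras (the first ``useful fact''), not from spanning $L_0$ by brackets of odd weight vectors. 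Only after $L_0\subset\langle x\rangle$ is secured does one pick $h\in\Omega_{\{\alpha_{-1},\alpha_1\}}$ to split $x_{-1}$ from $x_1$, where Lemma~\ref{lemmeigenvector} applies cleanly because only two or three distinct weights are in play. Without the $[x_1,x_1]=0$ trick and the balanced-element $1.5$-generation input, your argument has a genuine gap at the point where $L_0$ must be generated.
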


\begin{proof}
For $L=S,$ $\widetilde{S}$ or $H,$ except $H(6),$  fix  the standard
Cartan subalgebra $\mathfrak{h}.$ By  Lemmas
\ref{lem-cartan-weight-information} and \ref{zero} we choose
$\alpha_{-1}\neq \alpha_{1}$ for $\alpha_{i}\in \Delta_{i},\,\
i=-1,1$ and $x_{-1}\in L_{-1}^{\alpha_{-1}}$ and $x_{1}\in
L_{1}^{\alpha_{1}}$ such that $[x_{-1},x_{1}]\neq 0$ and
$[x_{1},x_{1}]=0.$ Let $x_{0}:=2[x_{-1},x_{1}].$ Choose a suitable
Cartan subalgebra $\mathfrak{h'}$ such that $x_{0}$ is an
$\mathfrak{h'}$-balanced element of $L_{0}.$ Let
$$x:=x_{-1}+x_{0}+h'+x_{1}$$ for $h'\in \Omega_{\Delta'_{0}},$ where
$\Delta'_{0}\subset \mathfrak{h}^{'*}$ is the root
 system of $L$ relative to $\mathfrak{h'}.$ Then
we have $$x_{-1}+x_{1},\,\ x_{0}+h'\in \langle x\rangle$$  and then
$$x_{0}=[x_{-1}+x_{1},x_{-1}+x_{1}]=2[x_{-1},x_{1}]\in \langle
x\rangle.$$   Furthermore, we have $h'\in \langle x\rangle.$ Then
$L_{0}=\langle x_{0}, h'\rangle\subset \langle x\rangle.$ Choose an
element
$$h\in\Omega_{\{\alpha_{-1},\alpha_{1}\}}\subset \mathfrak{h}\subset
L_{0}.$$ Then, by Lemma \ref{lemmeigenvector} we obtain that
$x_{-1}$ and $x_{1}$ lie in $\langle x\rangle.$ According to Lemma
\ref{lem-Cartan-component}(1) and (3), the irreducibility of
$L_{-1}$ and $L_{1}$ as $L_{0}$-modules ensures that
$L_{i}\subset\langle x\rangle, i=-1,1.$ Furthermore, $L=\langle
x\rangle.$

For $L=H(6)$ or $W,$ fix  the standard Cartan subalgebra
$\mathfrak{h}.$ From Lemmas \ref{lem-cartan-weight-information} and
\ref{zero} we choose $\alpha_{-1}, \alpha_{1}^{1}$ and
$\alpha_{1}^{2}$ are pairwise distinct for $\alpha_{-1}\in
\Delta_{-1},$ $\alpha_{1}^{i}\in \Delta_{1}^{i}, i=1,2$ and
$x_{-1}\in L_{-1}^{\alpha_{-1}}$ and $x_{1}^{i}\in
L_{1}^{\alpha_{1}^{i}}$ for $i=1,2$ such that
$[x_{-1},x_{1}^{1}+x_{1}^{2}]\neq 0$ and
$[x_{1}^{1}+x_{1}^{2},x_{1}^{1}+x_{1}^{2}]=0.$
 Let
$x_{0}:=2[x_{-1},x_{1}^{1}+x_{1}^{2}].$ Choose a suitable Cartan
subalgebra $\mathfrak{h'}$ such that $x_{0}$ is an
$\mathfrak{h'}$-balanced element of $L_{0}.$ Let
$$x:=x_{-1}+x_{0}+\delta_{L,W}z+h'+x_{1}^{1}+x_{1}^{2}$$ for $0\neq
z\in C(W_{0})\; \mbox{and}\; h'\in \Omega_{\Delta'_{0}},$ where
$\Delta'_{0}\subset \mathfrak{h}^{'*}$ is the root
 system of $L$ relative to $\mathfrak{h'}.$ Then we have
 $$x_{-1}+x_{1}^{1}+x_{1}^{2},\,\
x_{0}+\delta_{L,W}z+h'\in \langle x\rangle$$ and then
$$x_{0}=[x_{-1}+x_{1}^{1}+x_{1}^{2},x_{-1}+x_{1}^{1}+x_{1}^{2}]=2[x_{-1},x_{1}^{1}+x_{1}^{2}]\in
\langle x\rangle.$$   Furthermore, we have $h'+\delta_{L,W}z\in
\langle x\rangle.$ Then $$L_{0}=\langle x_{0}, h'+\delta_{L,W}z
\rangle\subset\langle x\rangle.$$ Choose an element
$$h\in\Omega_{\{\alpha_{-1},\alpha_{1}^{1},\alpha_{1}^{2}\}}\subset
\mathfrak{h}\subset L_{0}.$$ Then, by Lemma \ref{lemmeigenvector} we
obtain that $x_{-1},x_{1}^{1}$ and $x_{1}^{2}$ lie in $\langle
x\rangle.$ According to Lemma \ref{lem-Cartan-component}(1) and (3),
the irreducibility of $L_{-1},$ $L_{1}^{1}$ and $L_{1}^{2}$ as
$L_{0}$-modules ensures that $L=\langle x\rangle.$
\end{proof}

Theorems \ref{homogeneous generators} and \ref{homogeneous
generators cartan} combine to the main result of this paper:
\begin{theorem}
Any simple Lie superalgebra is generated by 1 element.
\end{theorem}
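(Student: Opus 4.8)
The final theorem to prove is: \emph{Any simple Lie superalgebra is generated by $1$ element.}

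\medskip

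The plan is to invoke the classification theorem of Kac \cite{Kac} and split into the three mutually exclusive cases it provides. First, if the simple Lie superalgebra $L$ happens to be an ordinary simple Lie algebra (i.e. $L_{\bar 1}=0$), then one element is \emph{not} in general enough — but this case is outside the scope of the statement as the paper reads it; indeed the intended reading (made explicit in the introduction) is that ``simple Lie superalgebra'' here means a genuine one with $L_{\bar 1}\neq 0$, equivalently the classification splits $L$ into a classical Lie superalgebra or a Cartan Lie superalgebra. So I would state at the outset that, by \cite{Kac}, every such $L$ is either classical (Table 1.1) or of Cartan type ($W$, $S$, $\widetilde S$, or $H$).

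\medskip

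The second step is simply to apply the two theorems already established. If $L$ is classical, then Theorem \ref{homogeneous generators} gives an explicit element $x\in L$ with $\langle x\rangle=L$; the proof there handles the generic case (all weight spaces one-dimensional, $0\notin\Delta_{\bar 1}$) by taking $x=h+\sum_{\gamma\in\Delta_{\bar 1}}x_{\bar 1}^\gamma$ with $h\in\Omega_{\Delta_{\bar 1}}$, and the three exceptional families $\mathrm{A}(1,1)$, $\mathrm{P}(3)$, $\mathrm{Q}(n)$ by the case-by-case matrix computations in Tables 1.2--1.7. If instead $L$ is of Cartan type, then Theorem \ref{homogeneous generators cartan} gives an explicit generator built from the local part $L_{-1}\oplus L_0\oplus L_1$, using Lemmas \ref{lem-Cartan-component}, \ref{lem-cartan-weight-information}, \ref{zero} together with the ``balanced element'' technology imported from \cite{MBJ}. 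Since these two cases exhaust the classification, the theorem follows immediately.

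\medskip

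There is essentially no new obstacle at this final stage — the theorem is a one-line corollary of Theorems \ref{homogeneous generators} and \ref{homogeneous generators cartan} combined with Kac's trichotomy. If I were reconstructing the \emph{substance} rather than the assembly, the genuinely hard points would be (i) ensuring, in the Cartan case, that one can simultaneously choose distinct weights $\alpha_{-1}\neq\alpha_1$ (or $\alpha_{-1},\alpha_1^1,\alpha_1^2$ pairwise distinct) \emph{and} an odd vector $x_1$ with $[x_1,x_1]=0$ \emph{and} a vector $x_{-1}$ with $[x_{-1},x_1]\neq 0$, all compatibly — this is exactly what Lemmas \ref{lem-cartan-weight-information} and \ref{zero} are designed to guarantee; and (ii) in the three exceptional classical families, the bare-hands verification that the chosen sum of odd root vectors, after stripping off homogeneous components via Lemma \ref{lemmeigenvector}, regenerates all of $L_{\bar 0}$ (via $\bigoplus_\alpha[L_{\bar 0}^\alpha,L_{\bar 0}^{-\alpha}]=H$) and then, using irreducibility of the one or two pieces of $L_{\bar 1}$ as $L_{\bar 0}$-module, all of $L_{\bar 1}$. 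But all of this is already done; so the proof of the final theorem is merely:
\begin{proof}
By the classification theorem \cite{Kac}, a simple Lie superalgebra (other than a simple Lie algebra) is either a classical Lie superalgebra or a Cartan Lie superalgebra. In the former case the assertion is Theorem \ref{homogeneous generators}, and in the latter it is Theorem \ref{homogeneous generators cartan}.
\end{proof}
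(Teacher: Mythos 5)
Your proposal is correct and is essentially identical to the paper's own treatment: the final theorem is obtained by combining Theorem \ref{homogeneous generators} (classical case) and Theorem \ref{homogeneous generators cartan} (Cartan case) via Kac's classification. Your remark that genuine simple Lie \emph{algebras} are excluded from the statement's scope matches the paper's own convention stated in the introduction.
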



\begin{thebibliography}{99}
\bibitem{MBJ} J.-M. Bois. Generators of simple Lie algebras in arbitrary
characteristics. \textit{Math. Z.} \textbf{262} (2009): 715-741.

\bibitem{MBJ1} J.-M. Bois. Generators of simple Lie algebras
\emph{2}. \textit{arXiv}: 0806. 4496vl \textit{math. RT}, 2008.

\bibitem{GK}  R. Guralnick,   W. Kantor. Probabilistic generation of finite simple groups. \textit{J. Algebra } \textbf{234}(2) (2000): 743--792.


\bibitem{H} J. E. Humphreys. Introduction to Lie Algebras and
Representations Theory. Springer Verlay. New York, 1972.

\bibitem{GIT}  T. Ionescu. On the generators of semi-simple Lie
algebras. \textit{Linear Algebra Appl.} \textbf{15} (3), (1976):
271-292.
\bibitem{Kac} V.G. Kac. Lie superalgebras. \textit{Adv. Math.} \textbf{26} (1977): 8-96.
\bibitem{GKM}  M. Kuranish. On everywhere dense imbedding of free
groups in Lie groups. \textit{Nagoya Math. J.} \textbf{2} (1951):
63-71.
\bibitem{MS} M. Scheunert. Theory of Lie superalgebras. \textit{Lecture Notes Math.}
\textbf{716} (1979), Springer-Verlag.
\bibitem{Z} R. B. Zhang. Serre presentstions of Lie superalgebras.
\textit{arXiv}: 1101. 3114vl \textit{math. RT}, 2011.
\end{thebibliography}
\end{document}